\newtheorem{definition}{Definition}
\newtheorem{lemma}{Lemma} 
\newtheorem{theorem}{Theorem}
\newtheorem{proposition}{Proposition}
\newtheorem{remark}{Remark}
\begin{document}

\title{Mixed Traffic: A Perspective from Long Duration Autonomy}
\author{Filippos Tzortzoglou, \IEEEmembership{Student Member, IEEE}, Logan E. Beaver, \IEEEmembership{Member, IEEE}
\thanks{F. Tzortzoglou is with the Department of Civil \& Environmental Engineering, Cornell University, Ithaca, NY 11111 USA (e-mail: ft253@cornell.edu).}
\thanks{L.E. Beaver is with the Department of Mechanical \& Aerospace Engineering, Old Dominion University, Norfolk, VA 23529 USA (e-mail: lbeaver@odu.edu).}
}

\maketitle


\begin{abstract}
The rapid adoption of autonomous vehicle has established mixed traffic environments, comprising both autonomous and human-driven vehicles (HDVs), as essential components of next-generation mobility systems. Along these lines, connectivity between autonomous vehicles and infrastructure (V2I) is also a significant factor that can effectively support higher-level decision-making. At the same time, the integration of V2I within mixed traffic environments remains a timely and challenging problem. In this paper, we present a long-duration autonomy controller for connected and automated vehicles (CAVs) operating in such environments, with a focus on intersections where right turns on red are permitted. We begin by deriving the optimal control policy for CAVs under free-flow traffic. Next, we analyze crossing time constraints imposed by smart traffic lights and map these constraints to controller bounds using Control Barrier Functions (CBFs), with the aim to drive a CAV to cross the intersection on time. We also introduce criteria for identifying, in real-time, feasible crossing intervals for each CAV. To ensure safety for the CAVs, we present model-agnostic safety guarantees, and demonstrate their compatibility with both CAVs and HDVs. Ultimately, the final control actions are enforced through a combination of CBF constraints, constraining CAVs to traverse the intersection within the designated time intervals while respecting other vehicles. Finally, we guarantee that our control policy yields always a feasible solution and validate the proposed approach through extensive simulations in MATLAB.

\end{abstract}

\begin{IEEEkeywords}
autonomous systems, connected vehicles, long-duration autonomy, barrier functions
\end{IEEEkeywords}

\vspace{-12pt}
\section{Introduction}

\IEEEPARstart{E}{merging} mobility systems, including connected and automated vehicles (CAVs) and on-demand mobility services, present significant opportunities for creating sustainable transportation networks.  In recent decades, the global transportation sector has experienced a notable increase in automation and connectivity while CAVs have already started penetrating the market \cite{bang2024mobility}. However, 100\% penetration rate of CAVs is not expected before 2060 \cite{alessandrini2015automated}. Thus, the need for efficient protocols that allow the sustainable operation of both CAVs and human driven vehicles (HDVs) in the same transportation network is critical for their effective deployment \cite{li2023survey}. Along these lines, the long-term sustainability of such systems, without the need for periodic human feedback, is necessary for their widespread adoption. 

In the domain of mixed traffic, where both CAVs and HDVs coexist within the same transportation network, several approaches aim to address the significant challenges associated with safety and efficiency across various traffic scenarios \cite{li2024safe}. For example, recent research highlights methods for addressing the problem of efficient merging on unsignalized ramps, considering different penetration rates of autonomous vehicles \cite{le2024stochastic,chen2023deep,li2024managing}. Another study \cite{wang2022ego} examined the impact of CAVs on traffic flow under varying penetration rates, incorporating strategic lane changes. Meanwhile, some researchers have focused on signalized intersections, where CAVs plan their trajectories based on traffic signal phases to maximize the overall traffic flow at the intersection, accounting for the presence of human drivers \cite{sun2020optimal,chochliouros2021v2x}. Recent studies have also explored the simultaneous control of CAVs and traffic light phases \cite{li2024managing}. However, the joint optimization of traffic lights and CAVs remains a complex problem that is not feasible for real-time implementation, especially when number of vehicles at the intersections increases. As a result, bi-level approaches have been proposed, where CAV trajectories are determined after the time-crossing intervals are established \cite{tajalli2021traffic,tzortzoglou2025safeefficientcoexistenceautonomous}. Note that all the aforementioned studies have explored various control methods, such as optimal control \cite{le2024stochastic,sun2020optimal,tzortzoglou2025safeefficientcoexistenceautonomous}, control barrier functions \cite{li2024safe} or reinforcement learning \cite{chen2023deep,li2024managing}, each offering specific advantages and disadvantages.

Although the results from these studies are noteworthy, to the best of our knowledge, no existing approach in the literature addresses the problem of mixed traffic from the perspective of long-duration autonomy.  In this paper, we aim to bridge this gap by defining a controller that enables the holistic operation of CAVs alongside HDVs within a transportation network. Also the majority of existing studies require communication between CAVs \cite{le2024stochastic,li2024safe}. In contrast, in this study we focus exclusively on connectivity with the infrastructure (V2I), assuming that CAVs can perceive their immediate surroundings using onboard sensors, cameras, or lidar. Leveraging infrastructure-based connectivity offers a more scalable and flexible solution, while still enabling coordinated decision-making in the presence of HDVs.

By definition, a long-duration autonomy approach must be applicable across all traffic scenarios. In our view, one of the most complex scenarios involves signalized intersections that allow right turns on red traffic light. Therefore, we focus our analysis on this challenging case, considering that the proposed approach can be readily extended to other traffic scenarios. 

We first present theoretical arguments related to the optimal control of CAVs in a free-flow environment, by solving the infinite-horizon unconstrained optimal control problem. We then define conditions that ensure: 1) the safety of CAVs in the presence of other vehicles, and 2) that a CAV crosses a traffic light within a designated green phase interval. These conditions are mapped to constraints that are linear in the control input, while satisfying the forward invariance property through Control Barrier Functions (CBFs). Additionally, we analyze how a CAV can efficiently identify a feasible crossing interval to pass through the intersection. Ultimately, the final control policy is implemented by imposing state-dependent control bounds to the nominal control input through CBFs. Finally we also propose a mechanism to account for right turns on red traffic light. The proposed approach is validated through simulations in MATLAB. This article contributes to the field by: \begin{itemize} \item Providing a framework that maps infinite-horizon optimal control problems to simple reactive controllers. \item Proposing a constraint formulation that enforces green-phase intersection crossing while guaranteeing safety between CAVs and other vehicles. \item Presenting conditions for the CAVs to determine their crossing interval in real-time.  \item Introducing a framework that enables CAVs to execute right turns at red lights while accounting for oncoming vehicles. \item Guaranteeing our control policy always produces a feasible control input 
\end{itemize}

The remainder of the article is organized as follows: In Section \ref{formulation}, we present our problem formulation. In Section \ref{CBF section}, we propose our reactive controller utilizing CBFs. In Section \ref{Handling right turns} we discuss the strategy for handling right turns on red light, and Section \ref{simulations} provides simulation results using MATLAB. Finally, in Section \ref{sec:conclusion}, we present our concluding remarks and discuss directions for future work.

\begin{figure}
        \centering
        \includegraphics[width=0.35\textwidth]{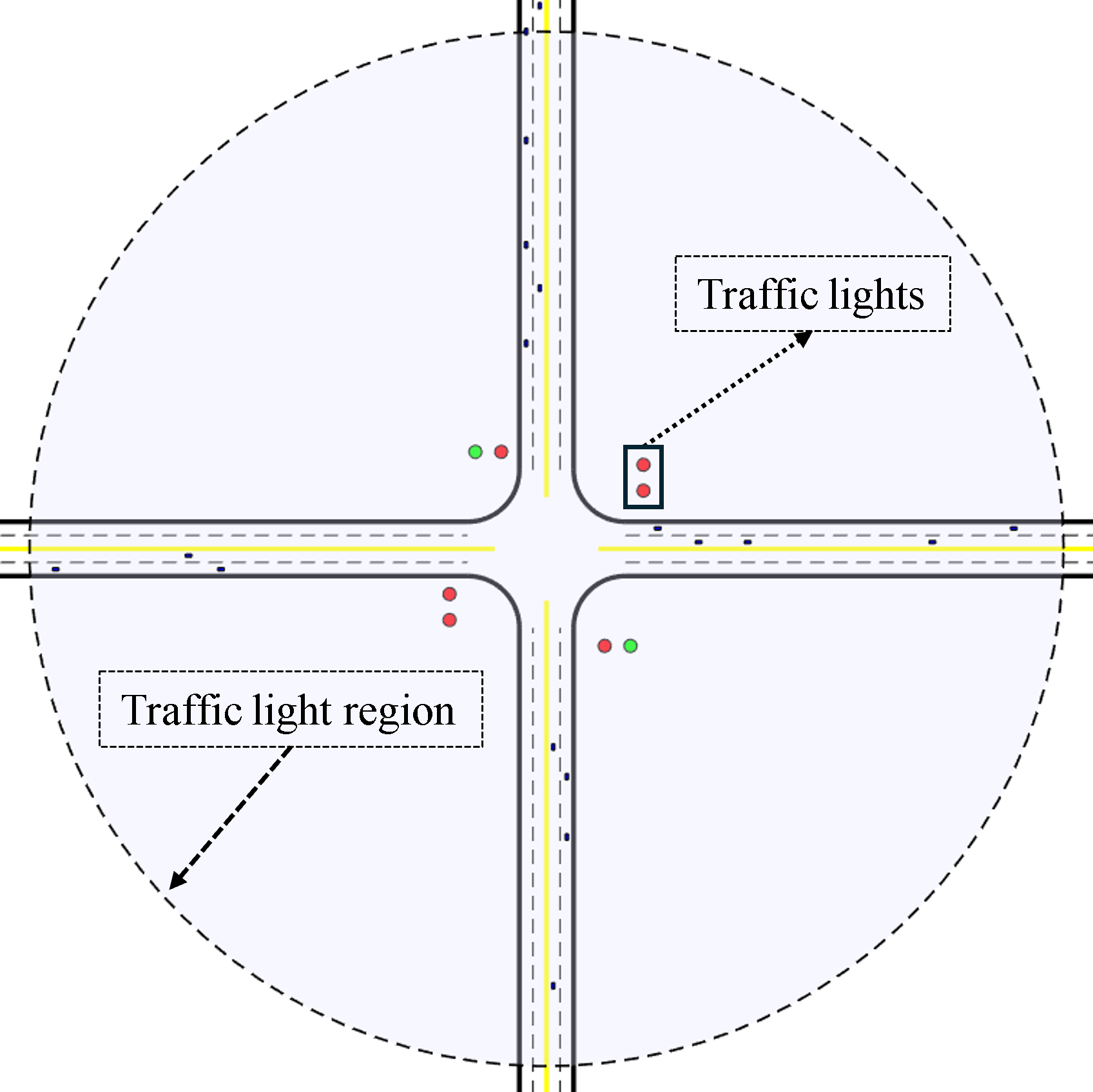}
        \caption{A four-way signalized intersection with smart traffic lights and dedicated left-turn lanes.}
    \label{fig:Intersection}
    \vspace{-12pt}
\end{figure}


\subsection{Notation}
Many classical references in optimal control (e.g., \cite{Bryson1975AppliedControl,Ross2015}) focus on centralized control problems. As a result, directly adopting their notation in decentralized settings can lead to ambiguities—particularly regarding the definition of the state space. Following the approach in \cite{beaver2024long}, we adopt the following convention for an agent indexed by \( i \). \textit{Endogenous} variables (e.g., the state of agent \( i \)) are denoted without explicit time dependence, while \textit{exogenous} variables (e.g., the position of another agent \( j \), as measured by agent \( i \)) are written as functions of time. This notation, which is common in the applied mathematics literature \cite{Levine2011OnFlatness}, clearly distinguishes between functions that evolve with the state (e.g., state dynamics) and those that evolve explicitly with time (e.g., external signals observed by the agent).

\section{Problem Formulation} \label{formulation}

Although our analysis is applicable to a wide range of traffic scenarios, we choose to focus on intersections that permit right turns on red, as we consider them among the most complex cases. This scenario effectively generalizes other challenging situations, such as roundabouts and merging, which can be viewed as variations of the same fundamental problem.  Consider now a collection of $N$ CAVs and $M$ HDVs traveling towards an intersection as in Fig. \ref{fig:Intersection}, where $N,M\in\mathbb{R}^+$. We index each vehicle with a subscript $i\in\{1,2,\dots,N,N+1,\dots,M\}$. Each vehicle has two states $p_i, v_i\in\mathbb{R}$ and a control action $u_i\in\mathbb{R}$. These correspond to the longitudinal position, speed, and acceleration of the CAV along a fixed path. The only difference between CAVs and HDVs is that in the first case the control action $u_i$ is determined based on our proposed control policy while for HDVs the control action is determined by a human driver.
To model the dynamics, we employ a double-integrator model,
\begin{equation} \label{eq:dynamics}
    \begin{bmatrix}
        \dot{p}_i \\ \dot{v}_i
    \end{bmatrix}
    = 
    \begin{bmatrix}
        0 & 1 \\ 0 & 0
    \end{bmatrix}
    \begin{bmatrix}
        p_i \\ v_i
    \end{bmatrix}
    +
    \begin{bmatrix}
        0 \\ 1
    \end{bmatrix}
    u_i,
\end{equation}
which has a control-affine representation,
\begin{equation} \label{general form of dynamics}
  \dot{\bm{x}}_i = f(\bm{x}_i,t) + g(\bm{x}_i,t)\bm{u}_i  
\end{equation}
where $\bm{x}_i=[p_i, v_i]^T$ is the state vector, $f(\bm{x}_i,t)=[v_i,0]^T$ is the drift vector field, and $g(\bm{x}_i,t)=[0,1]^T$ is the control vector field. Here, in our notation we retain the argument $t$ to conform with the general control-affine notation used later when time-varying constraints are introduced.

Finally, we define the \textit{traffic-light region} as the communication zone within which CAVs can receive information from the traffic signals (see Fig. \ref{fig:Intersection}). When a CAV enters this zone, it immediately obtains the current signal phase and remaining cycle durations, allowing it to adjust its control policy to safely cross the intersection.

Before we continue into the derivation of our control policy we impose the following assumptions throughout the paper. 

\begin{itemize}
    \item \textbf{Assumption 1:} Each CAV is equipped with low-level tracking controllers that can follow a reference speed and apply steering for lane keeping.
    \item \textbf{Assumption 2:} Communication between the CAVs and the traffic-light infrastructure is assumed to be free of appreciable noise or delay.
    \item \textbf{Assumption 3:} The traffic lights have a yellow signal that is sufficiently long for CAVs and HDVs to clear the merging zone between signal phase changes.
    \item \textbf{Assumption 4:} The traffic light has an existing signaling policy, where each phase eventually repeats.
\end{itemize}

We adopt Assumption 1 so the analysis can concentrate on higher-level decision making; deviations from perfect tracking can readily be incorporated if needed (see \cite{ChalakiCBF2022}).  
Assumption 2 can be relaxed in the same way by explicitly modeling noisy or delayed messages, as shown in \cite{chalaki2021RobustGP}. Assumption 3 is introduced primarily to prevent a clearance phase at the intersection, during which all traffic lights would be red. This assumption is not restrictive and serves only to guarantee lateral collision avoidance.
Finally, we employ Assumption 4 because designing an efficient signaling strategy is beyond the scope of this work.
Developing signaling policies for intersections is an area of active and ongoing research \cite{tzortzoglou2025safeefficientcoexistenceautonomous}.

\subsection{CAVs under free-flow}
For CAVs our first objective is the long duration deployment in urban and highway environments. As a first goal we want to track a reference speed that is either energy-optimal (e.g., maximizing fuel efficiency \cite{afdc10312}), or flow-optimal (i.e., the speed limit). In addition we regularize the objective function using the $L^2$ norm of the control input to avoid sudden jumps in acceleration, which indirectly influence fuel consumption \cite{malikopoulos2021optimal}. Thus, for each CAV $i$, we model the objective function with an infinite-horizon cost,
\begin{equation} \label{eq:Jcost}
    J(p_i, v_i, u_i) = \frac{1}{2}\int_{0}^{\infty} (v_i - v_i^d)^2 + \frac{1}{\phi^2} u_i^2\,dt,
\end{equation}
where $v_i^d$ is a desired driving speed, e.g., the energy-optimal speed or the posted speed limit, $\phi$ is a regularizing parameter to enforce smooth accelerations, and we use the fractions $1/2$ and $1/\phi^2$ to simplify the notation of our optimal solution. The cost \eqref{eq:Jcost} with dynamics \eqref{eq:dynamics} is a standard infinite-horizon Linear Quadratic Regulator problem. The optimal solution is the feedback control law \cite{beaver2024long},
\begin{equation} \label{unconstrained trajectry}
    u_i(\bm{x}_i)=\phi(v_i^d-v_i).
\end{equation}
Thus, we aim to use \eqref{unconstrained trajectry} as a reference control input for all the CAVs, which effectively \textit{continuously replans} the unconstrained trajectory based on the current speed. However, the control law in \eqref{unconstrained trajectry} corresponds to an unconstrained trajectory, meaning that it does not account for any state, control, or rear-end constraints. Moreover, in the case of an intersection, it also neglects traffic light constraints. In the subsequent analysis, we discuss how we can incorporate all these constraints into our control design, while tracking the control law \eqref{unconstrained trajectry}. We initially rigorously state all of our constraints next.

\subsection{Constraints for CAVs}

Although each CAV is subject to its physical control bounds $u_{i,\min}$ and $u_{i,\max}$ we can assume without loss of generality that all CAVs share identical control bounds. This consideration is made solely for clarity and does not affect the results. Thus for each CAV we impose,
\begin{align} \label{eq:contro_bounds}
    |u_i| \leq u_{\max},
\end{align}
Also, each CAV must respect the speed constraints,
\begin{align} \label{eq:speed_bounds}
   0 & \leq v_i \leq v_{\max},
\end{align}
and the rear-end safety constraints between sequential vehicles,
\begin{equation}\label{eq:rear-end}
    \delta_i(t)-p_i\geq \gamma_i,
\end{equation}
 where $\delta_i(t)$ measures the position of the preceding vehicle (if
 it exists), and $\gamma_i$ is a fixed standstill distance. If no preceding vehicle exists, we let $\delta_i(t) \to \infty$. Finally, each CAV must respect the traffic light constraints. That is, it must pass through a traffic light only if it is green. This constraint is represented as, 
\begin{align} \label{eq:traffic_light_bounds}
t_i^{tr}\in \mathcal{T}_i^g
\end{align}
where $t_i^{tr}$ denotes the time CAV $i$ crosses the traffic light position which is denoted as $p_i^{tr}$, while $\mathcal{T}_i^g$ denotes the set of available crossing time intervals.

At this stage, we have defined the set of constraints. Thus, our goal is to create a reactive controller based on \eqref{unconstrained trajectry} and the constraints \eqref{eq:contro_bounds}, \eqref{eq:speed_bounds}, \eqref{eq:rear-end} and \eqref{eq:traffic_light_bounds}, while ensuring forward invariance along with recursive feasibility (i.e. solution existence). In the following section, we reformulate all constraints into an equivalent form that is linear in the control input, guarantes constraint satisfaction, forward invariance and recursive feasibility.

\section{Reactive controller design} \label{CBF section}

Before proceeding with our exposition, we recall the concepts of forward invariance and class $\mathcal{K}$ functions. Conceptually, the forward invariance property ensures that if a vehicle starts in a safe state, it will remain safe for all future times. A class $\mathcal{K}$ function provides a strictly increasing, continuous mapping that quantifies how a system’s response grows with the deviation from a reference, often used to encode safety margins or robustness in control design. We now present rigorous definitions for both concepts. 
\begin{definition}
\textbf{(Forward Invariance \cite{khalil1992nonlinear})}. A set $\mathcal{C} \subset \mathbb{R}^n$ is said to be \emph{forward invariant} with respect to a dynamical system if, for every initial condition $\bm{x}(0) \in \mathcal{C}$, the solution $\bm{x}(t)$ satisfies $\bm{x}(t) \in \mathcal{C}$ for all $t \geq 0$.
\end{definition}

\begin{definition}(\textbf{Class $\mathcal{K}$ function \cite{khalil1992nonlinear})} A continuous function $\alpha: [0, \eta) \to [0, \infty)$, $\eta > 0$ is said to belong to class $\mathcal{K}$ if it is strictly increasing and $\alpha(0) = 0$.
\end{definition}

We begin by noting that each of the aforementioned constraints \eqref{eq:contro_bounds}, \eqref{eq:speed_bounds}, \eqref{eq:rear-end}, and \eqref{eq:traffic_light_bounds} can be expressed in standard CBF form $b_q(\bm{x}_i, t) \geq 0$ for all $i \in \mathcal{N}$, where $q$ denotes the constraint index. Accordingly, the safe set associated with constraint $q$ and vehicle $i$ corresponds to a super-level set $\mathcal{C}_{i,q}(t) = \{\,\bm{x}_i \in \mathbb{R}^2 : b_q(\bm{x}_i,t) \ge 0\,\}.$ 

\begin{theorem}[Following \cite{Lindemann2019ControlTasks}]
    Forward invariance of a set $\mathcal{C}_{i,q}$ with respect to constraint $b_q(\bm{x}_i,t) \ge 0$ is guaranteed when \begin{equation}
    \label{eq:time_dep_cbf}
    \mathcal{L}_f b_q(\bm{x}_i, t) + \mathcal{L}_g b_q(\bm{x}_i, t) u_i + \frac{\partial b_q}{\partial t}(\bm{x}_i, t) \geq -\alpha(b_q(\bm{x}_i, t)),
\end{equation}
where $\mathcal{L}_f b_q$ and $\mathcal{L}_g b_q$ denote the Lie derivatives of $b_q$ with respect to $f$ and $g$, respectively.
\end{theorem}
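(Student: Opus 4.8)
The plan is to verify invariance directly by tracking the barrier function along closed-loop trajectories. Fix a vehicle $i$ and a constraint index $q$, assume $\bm{x}_i(0) \in \mathcal{C}_{i,q}(0)$ so that $b_q(\bm{x}_i(0),0) \ge 0$, and let $\bm{x}_i(\cdot)$ be the resulting trajectory of \eqref{general form of dynamics} under a control $u_i$ satisfying \eqref{eq:time_dep_cbf}. First I would define the scalar signal $h(t) := b_q(\bm{x}_i(t), t)$ and differentiate it along the trajectory. By the chain rule together with the control-affine dynamics $\dot{\bm{x}}_i = f(\bm{x}_i,t) + g(\bm{x}_i,t)\, u_i$, one obtains
\[
\dot h(t) = \frac{\partial b_q}{\partial \bm{x}_i}\big(f + g\,u_i\big) + \frac{\partial b_q}{\partial t} = \mathcal{L}_f b_q(\bm{x}_i,t) + \mathcal{L}_g b_q(\bm{x}_i,t)\, u_i + \frac{\partial b_q}{\partial t}(\bm{x}_i,t),
\]
which is exactly the left-hand side of the hypothesis \eqref{eq:time_dep_cbf}. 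This step requires only that $b_q$ be continuously differentiable in its arguments and that the trajectory be absolutely continuous, both of which hold under the standing setup.

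Substituting the hypothesis then yields the scalar differential inequality $\dot h(t) \ge -\alpha(h(t))$ for almost every $t \ge 0$, with $h(0) \ge 0$. The core of the argument is to convert this inequality into a sign statement on $h$ via a comparison principle. I would introduce the scalar comparison system $\dot y = -\alpha(y)$ with $y(0) = h(0) \ge 0$, and invoke the comparison lemma (Khalil \cite{khalil1992nonlinear}) to conclude $h(t) \ge y(t)$ on the interval of existence. Because $\alpha$ belongs to class $\mathcal{K}$ we have $\alpha(0) = 0$, so $y \equiv 0$ is an equilibrium of the comparison system; starting from $y(0) \ge 0$ the solution can only decrease toward this equilibrium and cannot cross it, giving $y(t) \ge 0$ for all $t$. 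Chaining the inequalities delivers $b_q(\bm{x}_i(t),t) = h(t) \ge y(t) \ge 0$, i.e. $\bm{x}_i(t) \in \mathcal{C}_{i,q}(t)$ for all $t \ge 0$, which is precisely forward invariance.

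The main obstacle I anticipate is the regularity needed to apply the comparison lemma cleanly: a class $\mathcal{K}$ function is only continuous and strictly increasing, not necessarily locally Lipschitz, so uniqueness of solutions to $\dot y = -\alpha(y)$ (and hence the textbook statement of the comparison lemma) is not immediate. I would handle this either by restricting attention to locally Lipschitz $\alpha$, as is standard in the CBF literature and sufficient for the choices used later in this paper, or by arguing with the maximal solution of the comparison system. As an alternative that sidesteps uniqueness entirely, I could give a direct boundary argument: if $h$ were to become negative, then continuity and $h(0)\ge 0$ would produce a first crossing time $t^\star$ with $h(t^\star)=0$ and $\dot h(t^\star) < 0$; but the inequality forces $\dot h(t^\star) \ge -\alpha(0) = 0$, a contradiction. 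The remaining ingredients, namely the chain-rule computation and the equilibrium analysis of the scalar ODE, are routine.
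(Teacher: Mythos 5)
Your main argument is correct, but it takes a genuinely different route from the paper. Both proofs begin identically, with the chain-rule computation showing that the left-hand side of \eqref{eq:time_dep_cbf} equals $\tfrac{d}{dt}b_q(\bm{x}_i,t)$ along trajectories of \eqref{general form of dynamics}. From there the paper invokes Nagumo's theorem: since $\alpha(0)=0$, the hypothesis \eqref{eq:time_dep_cbf} yields $\dot b_q \ge 0$ on the boundary set $\{b_q = 0\}$, which is exactly the sub-tangentiality condition, and forward invariance follows; the class-$\mathcal{K}$ term is then interpreted as a means of proactively shaping behavior in the interior. You instead run a comparison-lemma argument: from $\dot h \ge -\alpha(h)$, $h(0)\ge 0$, compare against the scalar system $\dot y = -\alpha(y)$, use $\alpha(0)=0$ to conclude $y(t)\ge 0$, and chain the inequalities. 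Your route is the one standard in the modern CBF literature (and in the reference the theorem follows), and it buys more: it delivers the quantitative bound $h(t)\ge y(t)$ rather than a bare sign statement, handles the time-varying set cleanly because everything is reduced to a scalar ODE, and makes the regularity requirements explicit, namely local Lipschitz continuity of $\alpha$ (or passage to maximal solutions) for the comparison lemma to apply. The paper's route is shorter and boundary-only, but it quietly inherits the technical hypotheses of Nagumo's theorem (regularity of the set, behavior of solutions at the boundary), which are left implicit.

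One caution: the fallback you describe as sidestepping uniqueness "entirely" --- the first-crossing contradiction --- is not actually routine, and you should not rely on it. At the first time $t^\star$ with $h(t^\star)=0$ one cannot conclude $\dot h(t^\star)<0$; the function can touch zero with vanishing derivative and still become negative afterwards (a cubic touch-and-go). Worse, for $t$ past $t^\star$ the differential inequality $\dot h \ge -\alpha(h)$ is not even well posed, since a class-$\mathcal{K}$ function has domain $[0,\eta)$ and $\alpha(h)$ is undefined for $h<0$. Repairing this requires either extending $\alpha$ in a locally Lipschitz way to negative arguments or falling back on precisely the comparison or Nagumo machinery the shortcut was meant to avoid. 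Your primary argument via the comparison lemma, with the Lipschitz caveat you already flag, is the one to keep.
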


\begin{proof}
Using the chain rule, the total derivative of $b_q(\bm{x}_i, t)$ is given by
\begin{equation} \label{eq:dbdt}
    \frac{d}{dt}b_q(\bm{x}_i, t) = \nabla_{\bm{x}_i} b_q(\bm{x}_i,t) \cdot \dot{\bm{x}}_i + \frac{\partial b_q}{\partial t}(\bm{x}_i,t).
\end{equation}
Substituting the control-affine dynamics \eqref{general form of dynamics} into \eqref{eq:dbdt}, we obtain
\begin{align} \label{dot_b}
    \dot{b}_q(\bm{x}_i, t) =& 
    \nabla_{\bm{x}_i} b_q(\bm{x}_i, t) \cdot f(\bm{x}_i, t) + \notag \\ &\nabla_{\bm{x}_i} b_q(\bm{x}_i, t) \cdot g(\bm{x}_i, t) u_i + \frac{\partial b_q}{\partial t}(\bm{x}_i, t).
\end{align}
Note that \eqref{dot_b} can be written using Lie derivative notation,
\begin{equation}
    \dot{b}_q(\bm{x}_i, t) = \mathcal{L}_f b_q(\bm{x}_i, t) + \mathcal{L}_g b_q(\bm{x}_i, t) u_i + \frac{\partial b_q}{\partial t}(\bm{x}_i, t).
\end{equation}
By the Nagumo theorem \cite{Ames2019ControlApplications}, the set $\mathcal{C}_{i,q}(t) = \{\bm{x}_i \in \mathbb{R}^2 : b_q(\bm{x}_i, t) \ge 0\}$ is forward invariant if the inequality
\begin{equation}
\dot{b}_q(\bm{x}_i, t) \ge 0
\end{equation}
holds for all $\bm{x}_i$ on the boundary of the constraint set, i.e., where $b_q(\bm{x}_i, t) = 0$.
However, given that the Nagumo theorem applies only on the boundary, we want to ensure that the control input proactively avoids approaching the boundary. Thus, we introduce a class-$\mathcal{K}$ function $\alpha(\cdot)$ and impose the condition:
\begin{equation} \label{dot b with K function}
\dot{b}_q(\bm{x}_i, t) \ge -\alpha(b_q(\bm{x}_i, t)).
\end{equation}
Substituting \eqref{dot_b} into \eqref{dot b with K function} and using Lie derivative notation, yields the barrier condition
\begin{equation}
    \mathcal{L}_f b_q(\bm{x}_i, t) + \mathcal{L}_g b_q(\bm{x}_i, t) u_i + \frac{\partial b_q}{\partial t}(\bm{x}_i, t) \geq -\alpha(b_q(\bm{x}_i, t)),
\end{equation}
which satisfies the Nagumo theorem and is thus forward invariant.
\end{proof}

\begin{remark}
The Lie derivative of a scalar function $b(\bm{x})$ with respect to a vector field $f$ is given by $\mathcal{L}_f b(\bm{x}) = \nabla b(\bm{x}) f(\bm{x})$, where $\nabla b(\bm{x})$ is the gradient of $b(\bm{x})$. 
\end{remark}


Given \eqref{eq:time_dep_cbf}, we can transform the speed bounds \eqref{eq:speed_bounds} and traffic light constraints \eqref{eq:traffic_light_bounds} into the form of a time-varying barrier condition \eqref{eq:time_dep_cbf}.
Note that for the rear-end constraint \eqref{eq:rear-end}, the condition \eqref{eq:time_dep_cbf} does not yield an inequality that explicitly involves the control input $u$. Therefore, we must incorporate the concept of a High Order Control Barrier Function (HOCBF), which enables us to take successive derivatives of the constraint until the control input appears explicitly.
Finally, the control bounds \eqref{eq:contro_bounds} are already linear in the control input and thus require no further analysis.



\subsection{Barrier function design}
The speed limits of the vehicles can be written as $b_1(\bm{x}_i)=v_{\text{max}}-v_i\geq 0$ and $b_2(\bm{x}_i)=v_i-v_{\text{min}} \geq 0$. Substituting $b_1(\bm{x}_i)$ and $b_2(\bm{x}_i)$ into \eqref{eq:time_dep_cbf}, and selecting a linear class $\mathcal{K}$ function with coefficient $\kappa_s$, and rearranging terms, yields the barrier conditions,
\begin{align} 
    u_i &\leq \kappa_s(v_{\text{max}}-v_i \label{speed_CBF_1})\\
     u_i &\geq \kappa_s(-v_i+v_{\text{min}} \label{speed_CBF_2}). 
\end{align}
As shown in \cite{beaver2021constraint}, letting $\kappa_s \to \infty$ maximizes the size of the safe sets $\mathcal{C}_{i,1}$ and $\mathcal{C}_{i,2}$, thereby minimizing the impact of the constraint. Conversely, smaller values of $\kappa_s$ lead to more conservative constraints, meaning that a CAV proactively adjusts its control bounds even before approaching the speed limits.

Next, we consider the traffic light crossing constraint \eqref{eq:traffic_light_bounds}. Let us define $t_i^0$ as the time at which a CAV enters the traffic light region to cross the upcoming intersection. Given a set of admissible crossing intervals $\mathcal{T}_i^g = \{[t_i^1, t_i^1], [t_i^2, t_i^2], \dots, [t_1^{k-1}, t_i^k]\}$, our objective is to ensure that the CAV crosses the traffic light within one of these intervals.
However, two key questions arise:  
1) How can we drive a CAV to cross the traffic light during a selected interval? 2) How does a CAV select a time interval from a set of disjoint available intervals? We address the first question in the following analysis and the second one in Section \ref{crossing interval definition}. We present the results for a crossing time CBF next.

\begin{lemma} \label{lma:primitives}
    The conditions
    \begin{align}
        v_i \geq& \frac{\Delta p}{\Delta t} - \frac{1}{2} u_{\max} \Delta t, \label{eq:kin-11} \\
        v_i \leq& \frac{\Delta p}{\Delta t} + \frac{1}{2} u_{\max} \Delta t, \label{eq:kin-2}
    \end{align}
    guarantee that CAV $i$ travels a distance $\Delta p$ before (after) the time interval $\Delta t$ elapses.
\end{lemma}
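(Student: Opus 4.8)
The plan is to treat Lemma~\ref{lma:primitives} as an elementary reachability computation for the double-integrator \eqref{eq:dynamics} under the symmetric control bound \eqref{eq:contro_bounds}, and to show that the two stated inequalities are precisely the conditions under which the fixed displacement $\Delta p$ is reachable within the horizon $\Delta t$. First I would fix the current state $(p_i, v_i)$ at the start of the horizon and integrate the dynamics under a constant acceleration $u_i = a$ with $|a| \le u_{\max}$, obtaining the closed-form displacement $\Delta p(a) = v_i\,\Delta t + \tfrac{1}{2} a\,\Delta t^2$ over an interval of length $\Delta t$. Since this displacement is affine and strictly increasing in $a$ (because $\Delta t > 0$), the extreme reachable displacements occur at the control-bound endpoints $a = \pm u_{\max}$, namely $\Delta p_{\max} = v_i\,\Delta t + \tfrac{1}{2} u_{\max}\Delta t^2$ and $\Delta p_{\min} = v_i\,\Delta t - \tfrac{1}{2} u_{\max}\Delta t^2$.

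With these extremes in hand, the two claims each reduce to reading off a single inequality. For the CAV to cover the distance $\Delta p$ \emph{before} the horizon elapses, its maximal reachable displacement must satisfy $\Delta p_{\max} \ge \Delta p$; rearranging $v_i\,\Delta t + \tfrac{1}{2} u_{\max}\Delta t^2 \ge \Delta p$ and dividing by $\Delta t$ yields \eqref{eq:kin-11}. Symmetrically, for the CAV to remain short of $\Delta p$ until \emph{after} the horizon (i.e., to avoid reaching the crossing point too early), its minimal reachable displacement must satisfy $\Delta p_{\min} \le \Delta p$; rearranging $v_i\,\Delta t - \tfrac{1}{2} u_{\max}\Delta t^2 \le \Delta p$ gives \eqref{eq:kin-2}. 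Together these confine $v_i$ to exactly the window for which $\Delta p$ can be attained at the horizon boundary.

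The only step requiring genuine care is the claim that the saturated constant controls $a = \pm u_{\max}$ actually realize the extremal displacements among all admissible, possibly time-varying, inputs, since the one-line reading above merely \emph{assumes} constant acceleration. I would justify this by writing the displacement of a general admissible input via integration by parts as $\int_0^{\Delta t} v(t)\,dt = v_i\,\Delta t + \int_0^{\Delta t} u_i(s)\,(\Delta t - s)\,ds$, and then observing that the kernel $(\Delta t - s)$ is nonnegative on $[0,\Delta t]$, so the integral is maximized (resp. minimized) pointwise by $u_i \equiv u_{\max}$ (resp. $u_i \equiv -u_{\max}$). This confirms that the half-plane bounds \eqref{eq:kin-11}--\eqref{eq:kin-2} are tight and that no richer control profile can enlarge the reachable displacement interval. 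I would also remark that this argument deliberately ignores the speed limits \eqref{eq:speed_bounds}, which are enforced separately through their own barrier conditions, so that the lemma isolates the crossing-time requirement.
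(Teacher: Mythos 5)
Your proposal is correct and takes essentially the same route as the paper, whose entire proof is the one-line remark that the result follows from re-arranging the constant-acceleration kinematic equations (citing \cite{beaver2024optimal}); your first two paragraphs are exactly that rearrangement. Your additional integration-by-parts argument showing that the saturated constant inputs $u_i \equiv \pm u_{\max}$ are extremal over all admissible time-varying inputs supplies rigor the paper delegates to its citation, and your closing remark that the speed bounds \eqref{eq:speed_bounds} are deliberately excluded matches how the paper treats them separately (Lemma~\ref{lma:vmax} and Theorem~\ref{thm:no-overshoot}).
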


\begin{proof}
    The proof comes from re-arranging the constant-acceleration kinematic equations; see \cite{beaver2024optimal}.
\end{proof}

Lemma 1 offers a valuable insight as it enables us to formulate a condition that links the vehicle's current speed, its acceleration bounds, a given distance $\Delta p$, and a time interval $\Delta t$. This condition ensures that the CAV can successfully traverse the specified distance within the allotted time. Thus, we can take advantage of this result and map the constraints \eqref{eq:kin-11} and \eqref{eq:kin-2} into barrier conditions using \eqref{eq:time_dep_cbf}. This allows us to derive control bounds that drive each CAV to cross the traffic light within a specified time interval, which we prove in the following result.

\begin{theorem} \label{thm:main}
Let CAV $i$ approach a traffic light at position $p_i^{tr}$ and consider the crossing time interval $[\underline{t}_i^{k}, \bar{t}_i^{k}]$. Then, the barrier conditions,
\begin{equation}
    u_i \geq \kappa_T \left( \frac{\Delta p_i}{\Delta t_{i,2}} - u_{\max} \frac{\Delta t_{i,2}}{2} - v_i \right)
    + \frac{\Delta p_i - v_i \Delta t_{i,2}}{\Delta t_{i,2}^2} + \frac{u_{\max}}{2},
    \label{boundary1}
\end{equation}
\begin{align}
    u_i \leq -\kappa_T \left( v_i - \frac{\Delta p_i}{\Delta t_{i,1}} - u_{\max} \frac{\Delta t_{i,1}}{2} \right)&+\frac{\Delta p_i - v_i \Delta t_{i,1}}{\Delta t_{i,1}^2} \nonumber \\&-\frac{u_{\max}}{2},
    \label{boundary2}
\end{align}
\noindent where $\Delta p_i := p_i^{tr} - p_i$, $\Delta t_{i,1} := \underline{t}_i^{k} - t$, $\Delta t_{i,2} := \bar{t}_i^{k} - t$, and $\kappa_T > 0$ is a gain, guarantee the arrival of CAV $i$ at position $p_i^{tr}$ during the interval $[\underline{t}_i^{k}, \bar{t}_i^{k}]$.
\end{theorem}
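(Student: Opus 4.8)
The plan is to build two time-varying barrier functions straight out of the kinematic inequalities of Lemma~\ref{lma:primitives}, and then feed each through the barrier condition \eqref{eq:time_dep_cbf} of Theorem~1. First I would translate the two arrival requirements into the distance/time quantities of the statement. To force arrival \emph{no later} than $\bar{t}_i^{k}$, I apply the lower-speed bound \eqref{eq:kin-11} with $\Delta t = \Delta t_{i,2} = \bar{t}_i^{k}-t$ and $\Delta p = \Delta p_i = p_i^{tr}-p_i$, defining the safe set $\mathcal{C}_{i,3}$ through
\begin{equation}
  b_3(\bm{x}_i,t) = v_i - \frac{\Delta p_i}{\Delta t_{i,2}} + \frac{u_{\max}}{2}\,\Delta t_{i,2} \ge 0 .
\end{equation}
To forbid arrival \emph{earlier} than $\underline{t}_i^{k}$, I apply the upper-speed bound \eqref{eq:kin-2} with $\Delta t = \Delta t_{i,1} = \underline{t}_i^{k}-t$, defining $\mathcal{C}_{i,4}$ through
\begin{equation}
  b_4(\bm{x}_i,t) = \frac{\Delta p_i}{\Delta t_{i,1}} + \frac{u_{\max}}{2}\,\Delta t_{i,1} - v_i \ge 0 .
\end{equation}
By Lemma~\ref{lma:primitives}, keeping $b_3\ge 0$ and $b_4\ge 0$ on the approach is precisely the statement that CAV $i$ reaches $p_i^{tr}$ inside $[\underline{t}_i^{k},\bar{t}_i^{k}]$, so it remains only to verify that \eqref{boundary1}--\eqref{boundary2} are exactly the control bounds that render these two sets forward invariant.

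Next I would evaluate \eqref{eq:time_dep_cbf} for each $b_q$ under the double-integrator fields $f=[v_i,0]^T$ and $g=[0,1]^T$, with a linear class-$\mathcal{K}$ function $\alpha(s)=\kappa_T s$. The crucial bookkeeping is to separate the state gradient from the explicit time partial: $\Delta p_i$ depends on $p_i$ (state), whereas $\Delta t_{i,1},\Delta t_{i,2}$ depend on $t$ (exogenous, with $\partial \Delta t_{i,j}/\partial t = -1$). For $b_3$ this gives $\mathcal{L}_f b_3 = v_i/\Delta t_{i,2}$, $\mathcal{L}_g b_3 = 1$, and $\partial b_3/\partial t = -\Delta p_i/\Delta t_{i,2}^2 - u_{\max}/2$; solving \eqref{eq:time_dep_cbf} for $u_i$ (the sign of $\mathcal{L}_g b_3=1$ leaves the inequality pointing the same way) and grouping $\Delta p_i/\Delta t_{i,2}^2 - v_i/\Delta t_{i,2}$ into $(\Delta p_i - v_i\Delta t_{i,2})/\Delta t_{i,2}^2$ collapses to the lower bound \eqref{boundary1}. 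The computation for $b_4$ is the mirror image: $\mathcal{L}_f b_4 = -v_i/\Delta t_{i,1}$, $\mathcal{L}_g b_4 = -1$, and $\partial b_4/\partial t = \Delta p_i/\Delta t_{i,1}^2 - u_{\max}/2$; here $\mathcal{L}_g b_4=-1$ flips the inequality when isolating $u_i$, producing the upper bound \eqref{boundary2}.

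The step I expect to demand the most care is the explicit time partial $\partial b_q/\partial t$, where the time-to-go sits in a denominator: the $-1/\Delta t^2$ scaling together with the sign inherited from the shrinking horizon $\partial \Delta t/\partial t=-1$ are both easy to misattribute, and the $\Delta p_i$ appearing there must \emph{not} be differentiated in $t$, since its variation is already carried by $\mathcal{L}_f$. Once the Lie derivatives and the two partials are correct, the rest is purely algebraic matching to the grouped form $(\Delta p_i - v_i\Delta t)/\Delta t^2$. I would then invoke Theorem~1 to conclude that enforcing \eqref{boundary1}--\eqref{boundary2} makes $\mathcal{C}_{i,3}$ and $\mathcal{C}_{i,4}$ forward invariant, and hence, by the Lemma~\ref{lma:primitives} reading, guarantees on-time arrival. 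I would close with a short limiting remark that $b_3\ge 0$ maintained up to $\bar{t}_i^{k}$ forces $\Delta p_i\to 0$ as $\Delta t_{i,2}\to 0^+$ (a positive remaining gap would require unbounded speed), which is what pins the crossing to the interval $[\underline{t}_i^{k},\bar{t}_i^{k}]$.
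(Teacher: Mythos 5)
Your proposal is correct and takes essentially the same route as the paper: rewrite the kinematic bounds of Lemma~\ref{lma:primitives} as barrier functions $b_3$ and $b_4$, apply the time-dependent barrier condition \eqref{eq:time_dep_cbf} with the linear class-$\mathcal{K}$ function $\alpha(s)=\kappa_T s$, and carry out the Lie-derivative and explicit-time-partial bookkeeping to recover \eqref{boundary1}--\eqref{boundary2}. Your version is in fact more explicit than the paper's proof (which omits the computation), and your sign choice $b_4 = \frac{\Delta p_i}{\Delta t_{i,1}} + \frac{u_{\max}}{2}\Delta t_{i,1} - v_i$ is the one actually consistent with \eqref{eq:kin-2} and \eqref{boundary2}; the paper's proof text writes a minus sign in that term, which is a typo.
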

\begin{proof}
We begin with the inequality \eqref{boundary1}. Note that the condition \eqref{eq:kin-11} can be equivalently rewritten as
$b_3(\bm{x}_i,t) := v_i - \frac{\Delta p}{\Delta t} + \frac{1}{2} u_{\max} \Delta t \geq 0$. We define a linear class $\mathcal{K}$ function $\alpha(b) := \kappa_T b$ with gain $\kappa_T > 0$. We apply the time-dependent barrier condition \eqref{eq:time_dep_cbf} to $b_3(\bm{x}_i,t)$, which guarantees set invariance by Theorem 1. This yields inequality \eqref{boundary1}. Correspondigly, for inequality \eqref{boundary2}, we apply again \eqref{eq:time_dep_cbf} to the condition \eqref{eq:kin-2}, which can be rewritten as $b_4(\bm{x}_i,t) := \frac{\Delta p}{\Delta t} - \frac{1}{2} u_{\max} \Delta t - v_i \geq 0$. This yields the result.
\end{proof}

Theorem 2 provides a significant result, as it enables the design of barrier conditions associated with available time intervals and the corresponding points along the CAV trajectories. As previously discussed, condition \eqref{boundary1} defines a lower bound on the control input of a CAV to ensure it crosses the point \( p_i^k \) no later than the time moment \( \bar{t}_i^{k} \). Similarly, condition \eqref{boundary2} establishes an upper bound, ensuring that the CAV crosses the point \( p_i^k \) at time \( \underline{t}_i^{k} \) or later. 

Finally, to create a barrier condition for rear-end safety constraint \eqref{eq:rear-end}, we again utilize \eqref{eq:time_dep_cbf}. Here we select the square root as our class-$\mathcal{K}$ function. That is $\alpha(b_q(\bm{x}_i,t))=\sqrt{2u_{\max}b_q(\bm{x}_i,t)}$.
This choice of class-$\mathcal{K}$ function leads to the classic minimum stopping distance constraint \cite{beaver2021constraint} after applying
\eqref{eq:time_dep_cbf},
\begin{equation} \label{rear_end_first_order_cbf}
    \dot{\delta_i}(t)-v_i+\sqrt{2u_{\max}(\delta_i(t)-p_i-\gamma_i)}\geq0.
\end{equation}
Note that by construction, the safe safe associated with \eqref{rear_end_first_order_cbf} will always respect \eqref{eq:contro_bounds} \cite{beaver2021constraint}. However, as previously discussed \eqref{rear_end_first_order_cbf} does not incorporate the control input $u_i$ and thus requires a HOCBF. To achieve this, we treat $\dot{\delta_i}(t)-v_i+\sqrt{2u_{\max}(\delta_i(t)-v_i-\gamma_i)}\geq0$ as a new constraint and we apply \eqref{eq:time_dep_cbf} to \eqref{rear_end_first_order_cbf} with a linear class $\mathcal{K}$ function with coefficient $\kappa_R$. This yields
\begin{align}
&\ddot{\delta}_i(t)-u_i+\frac{u_{\text{max}}( \dot{\delta}_i(t)-v_i)}{\sqrt{2u_{\text{max}}(\delta_i(t)-p_i -\gamma)}} \\ \nonumber
&+\kappa_R(\dot{\delta_i}(t)-v_i+\sqrt{2u_{\max}(\delta_i(t)-v_i-\gamma_i)}) \geq 0,
\end{align}
and rearranging terms yields,
\begin{align} \label{CBF_rear_end}
&u_i\leq\ddot{\delta}_i(t)-\frac{u_{\text{max}}( v_i-\dot{\delta}_i(t))}{\sqrt{-2u_{\text{max}}(p_i-\delta_i(t) +\gamma)}} \\ \nonumber
&-\kappa_R(v_i-\dot{\delta_i}(t)+\sqrt{-2u_{\max}(v_i-\delta_i(t)+\gamma_i))}.
\end{align}

\noindent Note that in \eqref{CBF_rear_end}, the term $\ddot{\delta}_i(t)$ represents the acceleration of the preceding vehicle. If the preceding vehicle is a CAV, we can assume that the following vehicle has direct access to the value of $\ddot{\delta}_i(t)$. On the other hand, if the preceding vehicle is an HDV, we can either assume that its acceleration is constant—a common assumption in MPC controllers in mixed traffic environments (see \cite{le2024distributed}) or, for a more robust approach, we can substitute $\ddot{\delta}_i(t)$ with a value lower than or equal to $-u_{\max}$, representing the minimum deceleration an HDV can achieve.

\begin{remark}
Note that if we assume a constant acceleration model for the HDVs when computing \( \ddot{\delta}_i(t) \) in \eqref{CBF_rear_end}, we may not precisely capture their actual behavior. However, since our controller is reactive and updates its control inputs at each time step, we expect our algorithm to adapt at least as well as any MPC-based controller. 
\end{remark}

Note that $\delta_i(t)$ and its derivatives may be discontinuous, i.e., if another CAV turns into the lane in front of CAV $i$ or CAV $i$ turns to a different lane that has a preceding vehicle. Recall that in the event that no CAV precedes $i$, we let $\delta_i(t)~\to~\infty$, and \eqref{CBF_rear_end} is trivially satisfied.

\subsection{Crossing interval selection}\label{crossing interval definition}
In the previous section we derived barrier conditions \eqref{boundary1} and \eqref{boundary2} that can drive a CAV to cross the traffic light (i.e. travel the distance $\Delta p_i)$ during a given time interval $\Delta t$. However, although the derivation of $\Delta p_i$ is trivial given that the traffic light is at a fixed point along the trajectory, the selection of a crossing time interval  $[\underline{t}_i^{k}, \bar{t}_i^{k}]$ from the set $\mathcal{T}_i^g$ needs careful handling since it can have a significant impact leading to long wait times or infeasible solutions. This problem refers to the second question we stated earlier regarding the selection of the green interval within which each CAV must cross. 

We now provide conditions for the online determination of the interval $[\underline{t}_i^{k}, \bar{t}_i^{k}]\in \mathcal{T}_i^g$. 
As it has been already discussed, upon entering the traffic light zone, a CAV receives information regarding the available green phase intervals from the traffic light system. To maximize traffic throughput, we choose to prioritize the earliest available time interval within the provided schedule. However, sometimes the earliest time interval might not be achievable due to the speed and acceleration limits of the CAV. Thus, to guarantee that the selection corresponds to a feasible CAV trajectory, we introduce the following results.
\begin{lemma} \label{traffic_feasibility}
When CAV $i$ enters the control zone, the signal crossing time determined by the light must satisfy,
    \begin{equation} \label{condition1}
        \Delta t_{i,2} \geq \frac{\sqrt{v_i^2 + 2 u_{\max}\Delta p_i} - v_i}{u_{\max}}.
    \end{equation}
\end{lemma}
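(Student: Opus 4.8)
The plan is to derive \eqref{condition1} directly from Lemma~\ref{lma:primitives}, which already characterizes when CAV $i$ can cover a distance $\Delta p$ within a window $\Delta t$ under the acceleration limit $u_{\max}$. The feasibility question here is precisely whether CAV $i$ can reach the traffic light at $p_i^{tr}$ no later than the end of the crossing interval $\bar{t}_i^{k}$, i.e. within the remaining time $\Delta t_{i,2} = \bar{t}_i^{k} - t$. Since the quickest way to traverse $\Delta p_i$ is to apply maximal acceleration throughout, the binding requirement is the lower-speed primitive \eqref{eq:kin-11} instantiated with $\Delta p = \Delta p_i$ and $\Delta t = \Delta t_{i,2}$; the upper-speed primitive \eqref{eq:kin-2} plays no role in a pure reachability test.

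First I would write \eqref{eq:kin-11} as $v_i \geq \Delta p_i / \Delta t_{i,2} - \tfrac{1}{2} u_{\max}\Delta t_{i,2}$. Because $\Delta t_{i,2} > 0$, multiplying through by $\Delta t_{i,2}$ and rearranging yields the quadratic inequality $\tfrac{1}{2}u_{\max}\Delta t_{i,2}^2 + v_i\,\Delta t_{i,2} - \Delta p_i \geq 0$ in the unknown $\Delta t_{i,2}$. The associated equation $\tfrac{1}{2} u_{\max}x^2 + v_i x - \Delta p_i = 0$ has roots $x = (-v_i \pm \sqrt{v_i^2 + 2u_{\max}\Delta p_i})/u_{\max}$. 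Since the leading coefficient $\tfrac{1}{2}u_{\max}$ is positive the parabola opens upward, so the inequality holds outside the roots; the smaller root is nonpositive (as $v_i \ge 0$ and $\Delta p_i \ge 0$), so the only branch compatible with a nonnegative time is $\Delta t_{i,2} \geq (\sqrt{v_i^2 + 2u_{\max}\Delta p_i} - v_i)/u_{\max}$, which is exactly \eqref{condition1}. Equivalently, the right-hand side is the minimum time to cover $\Delta p_i$ from initial speed $v_i$ under constant acceleration $u_{\max}$, obtained by solving $\Delta p_i = v_i t + \tfrac{1}{2} u_{\max}t^2$.

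The one point requiring care, and the step I expect to be the main subtlety rather than a genuine obstacle, is the speed ceiling $v_{\max}$: the derivation assumes the vehicle may accelerate freely over the entire approach, so if $v_{\max}$ binds before $p_i^{tr}$ the true minimum traversal time is larger and \eqref{condition1} becomes necessary but not sufficient. I would therefore state \eqref{condition1} as a fast necessary screening test that discards intervals $[\underline{t}_i^{k}, \bar{t}_i^{k}] \in \mathcal{T}_i^g$ whose endpoint $\bar{t}_i^{k}$ is unreachable, consistent with its intended use in the interval-selection procedure, and note that a sharper condition incorporating the accelerate-then-cruise profile at $v_{\max}$ could be substituted if a tight bound is desired.
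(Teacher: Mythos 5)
Your proposal is correct and follows essentially the same route as the paper: both require the entry state to satisfy the primitive \eqref{eq:kin-11} (membership in its safe set), rearrange to the quadratic $\tfrac{1}{2}u_{\max}\Delta t_{i,2}^2 + v_i\Delta t_{i,2} - \Delta p_i \geq 0$, and select the positive root (the paper invokes Vieta's condition where you argue the smaller root is nonpositive, which is the same observation). Your closing remark about $v_{\max}$ making the condition only necessary is consistent with the paper, which handles the speed ceiling separately in Lemma~\ref{lma:vmax}.
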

\begin{proof}
    Let us define the safe set associated with constraint \eqref{eq:kin-11} as $\mathcal{C}_{i,4}$. From Definition 1, we need $\bm{x}_i(0)\in \mathcal{C}_{i,4}$, i.e.,
    \begin{equation} \label{eq:lma-quad}
        \frac{1}{2}u_{\max}\Delta t^2 + v_i\Delta t - \Delta p \geq 0.
    \end{equation}
    Solving \eqref{eq:lma-quad} for $\Delta t$ (in our case $\Delta t_{i,2}$) provides the critical times for constraint satisfaction at entry,
    \begin{equation}
        \Delta t^* = \frac{- v_i\pm \sqrt{v_i^2+2u_{\max}\Delta p_i}}{u_{\max}}.
    \end{equation}
    By the Vieta condition the values of $\Delta t^*$ have opposite signs, and we take the positive solution.
\end{proof}

The condition in Lemma \ref{traffic_feasibility} is initially checked so that CAV~$i$ can determine a feasible crossing interval. However, due to the dynamic nature of the state of the intersection, condition \eqref{condition1} must be continuously monitored until CAV~$i$ completes its traversal of the intersection. If, at any point, the initially selected time interval $\Delta t_{i,2}$ fails to satisfy the condition in Lemma~\ref{traffic_feasibility} (e.g., because the preceding vehicle is traveling slowly), then the vehicle must re-evaluate its decision and select a later feasible crossing interval. In the following Lemma we prove that there will be always a crossing time interval satysfing condition \eqref{condition1}.
\begin{lemma}\label{existence_of_feasible_intervals}
Under Assumption 4, for each CAV $i$ entering the control zone there exists a time interval $[\underline{t}_i^k, \bar{t}_i^k]$ such that condition \eqref{condition1} is satisfied.    
\end{lemma}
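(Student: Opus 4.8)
The plan is to show that condition \eqref{condition1} is eventually satisfiable by exploiting Assumption 4, which guarantees that every signal phase—and in particular the green phase on CAV~$i$'s approach—repeats indefinitely. The key observation is that the right-hand side of \eqref{condition1}, namely $(\sqrt{v_i^2 + 2 u_{\max}\Delta p_i} - v_i)/u_{\max}$, is a fixed finite quantity at the instant the CAV enters the control zone (it depends only on the current speed $v_i$, the bounded distance $\Delta p_i$, and the constant $u_{\max}$). Call this quantity $\Delta t^{\min}_i$; it represents the minimum time physically required for the CAV to reach $p_i^{tr}$ under maximum acceleration. The claim then reduces to showing that there is some admissible crossing interval $[\underline{t}_i^k, \bar{t}_i^k] \in \mathcal{T}_i^g$ whose upper endpoint satisfies $\bar{t}_i^k - t \geq \Delta t^{\min}_i$, i.e., an interval that closes sufficiently far in the future.

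The main steps would proceed as follows. First I would fix the entry instant and record that $\Delta t^{\min}_i$ is a finite nonnegative constant, since $v_i \in [v_{\min}, v_{\max}]$ and $\Delta p_i$ is finite. Second, I would invoke Assumption 4 to assert that the green phase for CAV~$i$'s direction recurs periodically (or at least infinitely often) with some period $T_{\mathrm{cyc}} < \infty$; consequently the set $\mathcal{T}_i^g$ contains infinitely many disjoint intervals whose endpoints tend to infinity. Third, because $\Delta t_{i,2} = \bar{t}_i^k - t$ can be made arbitrarily large by selecting a sufficiently late green interval from this infinite collection, while the right-hand side $\Delta t^{\min}_i$ stays fixed, there must exist an index $k$ for which $\bar{t}_i^k - t \geq \Delta t^{\min}_i$, establishing \eqref{condition1}. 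Formally, one picks the smallest $k$ such that $\bar{t}_i^k - t \geq \Delta t^{\min}_i$; such $k$ exists because $\{\bar{t}_i^k\}$ is unbounded above.

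I would also address the re-evaluation scenario flagged in the paragraph preceding the lemma: even if the currently selected interval later violates \eqref{condition1} (for instance because a slow preceding vehicle forces a smaller achievable speed, shrinking the feasible window), the same unboundedness argument shows a strictly later interval remains available, so the CAV can always fall back to a feasible choice. The minor subtlety is that $\Delta p_i$ and $v_i$ evolve over time, so the bound $\Delta t^{\min}_i$ is itself time-varying during the approach; however, since $\Delta p_i$ decreases monotonically toward zero as the CAV advances and $v_i$ is bounded, $\Delta t^{\min}_i$ stays uniformly bounded above, preserving the argument at every re-evaluation instant.

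The step I expect to be the main obstacle—or at least the one requiring the most care—is making the appeal to Assumption 4 rigorous enough to conclude that $\mathcal{T}_i^g$ has endpoints unbounded above rather than merely being nonempty. Assumption 4 as stated only guarantees that \emph{each phase eventually repeats}; I would need to translate this into the quantitative statement that green intervals recur infinitely often with finite inter-arrival gaps, so that arbitrarily late crossing windows are genuinely available. I would argue that ``eventually repeats'' precludes any finite terminal time after which the relevant green phase never reappears, which is exactly what guarantees the unbounded sequence $\{\bar{t}_i^k\} \to \infty$ needed to dominate the fixed threshold $\Delta t^{\min}_i$.
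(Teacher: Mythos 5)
Your proposal is correct and follows essentially the same argument as the paper: fix the entry instant so that the right-hand side of \eqref{condition1} becomes a finite constant, then invoke Assumption~4 to select a later crossing interval whose $\Delta t_{i,2}$ exceeds that constant. Your explicit unpacking of ``eventually repeats'' into unboundedness of $\{\bar{t}_i^k\}$, and your remark on re-evaluation during the approach, are exactly the steps the paper leaves implicit (the latter appearing as its Remark~3), so the two proofs coincide in substance.
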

\begin{proof}
Let CAV $i$ enter the control zone at time $t_i^0$ with initial speed $v_i$, and let $\Delta p_i$ denote the remaining distance to the traffic light. Note that the values $v_i$ and $\Delta p_i$ are fixed upon entry, and the CAV must respect $u_{\max}$ by \eqref{eq:contro_bounds}.
Thus, the right hand side of \eqref{condition1} is fixed. Hence, increasing $\Delta t_{i,2}$ by updating the crossing time interval, will eventually satisfy \eqref{condition1}.    
\end{proof}
\begin{remark}
The result of Lemma~\ref{existence_of_feasible_intervals} holds for any time instant and is not restricted to the moment a CAV enters the control zone.
\end{remark}

Up to this point, we have defined the barrier conditions \eqref{boundary1} and \eqref{boundary2}, which ensure that a CAV traverses the intersection within a specified time interval. We have also introduced condition \eqref{condition1}, which determines whether a given crossing time interval is feasible.

However, a significant note that must be underlined is related to the nature of barrier condition \eqref{boundary2}. Note that \eqref{boundary2} implies that CAV $i$ will travel a distance $\Delta p_i$  at time $\Delta t_{i,1}$ or later. However, it does not directly consider the minimum speed constraint \eqref{eq:speed_bounds}. That is \eqref{boundary2} does not strictly guarantee that the CAV does not overshoot $\Delta p_i$, apply negative control input, come to a stop, and come back in reverse to reach distance $\Delta p_i$ at time $\Delta t_{i,1}$ or later.
We avoid this pathological case with our next result.

\begin{theorem} \label{thm:no-overshoot}
For a crossing time interval $[\underline{t}_i^{k}, \bar{t}_i^{k}]$ and CAV $i$, if  $\underline{t}_i^{k}$ satisfies \begin{equation} \label{eq:cross-time-limit}
    \Delta t_{i,1} \leq \sqrt{\frac{2 \Delta p_i}{u_{\max}}},
\end{equation}
\noindent then,  CAV $i$ will not cross the intersection before $\Delta t_{i,1}$ elapses.
\end{theorem}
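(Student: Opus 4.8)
The plan is to trace condition \eqref{eq:cross-time-limit} back to a single physically meaningful requirement: that the constant maximum-deceleration maneuver underlying the kinematic bound \eqref{eq:kin-2} never has to reverse before reaching the traffic light. Recall from Lemma~\ref{lma:primitives} that \eqref{eq:kin-2} is obtained from the constant-acceleration relation with $a=-u_{\max}$, so its tight case is the trajectory $p_i(\tau)=p_i+v_i\tau-\tfrac{1}{2}u_{\max}\tau^2$ that just covers $\Delta p_i$ at $\tau=\Delta t_{i,1}$, with boundary speed $v_i=\frac{\Delta p_i}{\Delta t_{i,1}}+\frac{1}{2}u_{\max}\Delta t_{i,1}$ obtained by making \eqref{eq:kin-2} tight. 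First I would argue that only this tight case can produce an early crossing: any state strictly satisfying \eqref{eq:kin-2} has a smaller speed and, by the forward invariance established in Theorem~1, reaches $p_i^{tr}$ no sooner, so it suffices to rule out the pathology on the boundary.

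Next I would analyze the velocity along this worst-case trajectory, $v_i(\tau)=v_i-u_{\max}\tau$, which stays nonnegative precisely while $\tau\le v_i/u_{\max}$; consequently $p_i(\tau)$ is nondecreasing on $[0,v_i/u_{\max}]$ and strictly decreasing (reversing) afterward. The overshoot-and-reverse behavior flagged before the theorem occurs exactly when $p_i^{tr}$ is attained on the decreasing branch, i.e. when the turning time $v_i/u_{\max}$ precedes the scheduled arrival $\Delta t_{i,1}$. Conversely, if $\Delta t_{i,1}\le v_i/u_{\max}$, then $p_i(\tau)<p_i^{tr}$ for all $\tau<\Delta t_{i,1}$ and the light is reached for the first time only at $\tau=\Delta t_{i,1}$.

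It then remains to translate $\Delta t_{i,1}\le v_i/u_{\max}$ into a condition on the schedule alone. Substituting the boundary speed, $\Delta t_{i,1}\le \tfrac{1}{u_{\max}}\big(\tfrac{\Delta p_i}{\Delta t_{i,1}}+\tfrac{1}{2}u_{\max}\Delta t_{i,1}\big)$ reduces to $\tfrac{1}{2}u_{\max}\Delta t_{i,1}^2\le \Delta p_i$, i.e. $\Delta t_{i,1}\le\sqrt{2\Delta p_i/u_{\max}}$, which is exactly \eqref{eq:cross-time-limit}. Under this hypothesis the vehicle keeps $v_i\ge0$, consistent with \eqref{eq:speed_bounds}, up to $\tau=\Delta t_{i,1}$ and first reaches $p_i^{tr}$ at that instant, so CAV $i$ does not cross before $\Delta t_{i,1}$ elapses.

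I expect the main obstacle to be conceptual rather than algebraic: one must justify rigorously that examining the tight constant-deceleration trajectory is enough, namely that the fastest motion compatible with barrier \eqref{boundary2} is the one that can overshoot while all interior states are strictly safer. A clean way to close this gap is to observe that along the tight boundary of \eqref{eq:kin-2} the induced acceleration is constant and equal to $-u_{\max}$, so the boundary trajectory literally is the constant-deceleration maneuver; after that, the final simplification to \eqref{eq:cross-time-limit} is routine and is not where the difficulty lies.
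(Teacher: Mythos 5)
Your proposal is correct and takes essentially the same route as the paper's proof: both reduce the no-overshoot requirement to keeping the speed nonnegative under maximum braking over the interval $\Delta t_{i,1}$ (your condition $\Delta t_{i,1} \le v_i/u_{\max}$ is exactly the paper's $0 \le v_i - u_{\max}\Delta t_{i,1}$), then substitute the tight-case speed from \eqref{eq:kin-2} and solve the resulting inequality for $\Delta t_{i,1}$. Your explicit justification that the tight constant-deceleration trajectory is the only case that can produce an early crossing is a detail the paper leaves implicit, but it is the same argument, not a different one.
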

\begin{proof}

Condition \eqref{boundary2} guarantees satisfaction of \eqref{eq:kin-2},
\begin{equation} \label{helper_theorem_3}
    v_i \leq \frac{\Delta p_i}{\Delta t_{i,1}} + \frac{1}{2} u_{\max} \Delta t_{i,1}. 
\end{equation}
To avoid overshooting the intersection, it is sufficient to impose the condition,
\begin{equation}\label{positive_speed}
    v_i \geq 0, \quad \forall t\in[t_i^0, \underline{t}_i^k].
\end{equation}
Thus, we want to ensure that the maximum braking input $-u_{\max}$ applied over the interval $\Delta t_{i,1}$ does not cause the vehicle to move in reverse. Thus, we seek to satisfy,
\begin{align} \label{helper2_theorem_3}
    0 &\leq v_i - u_{\max}\Delta t_{i,1}. 
    \end{align}
    Substituting \eqref{helper_theorem_3} into \eqref{helper2_theorem_3} yields,
    \begin{align}
    0 \leq \frac{\Delta p_i}{\Delta t_{i,1}} + \frac{1}{2} u_{\max} \Delta t_{i,1} - u_{\max}\Delta t_{i,1}.
\end{align}
Thus, for \eqref{positive_speed} to be satisfied, the following must hold,
\begin{equation}
    \frac{\Delta p_i}{\Delta t_{i,1}} - \frac{1}{2} u_{\max} \Delta t_{i,1} \geq 0.
\end{equation}
Solving for $\Delta t_{i,1}$ completes the proof. 

\end{proof}

Intuitively, Theorem \ref{thm:no-overshoot} gives a condition that $\Delta t_{i,1}$ must be ``small enough'' so that CAV $i$ does not have sufficient time to overshoot the intersection. 
Formally, when braking at the maximum rate, \eqref{eq:kin-2} describes the speed of agent $i$ with the differential equation,
\begin{equation} \label{eq:v-ode}
    \dot{p}_i = \frac{\Delta p_i}{\Delta t_{i,1}}  - \frac{1}{2}u_{\max} \Delta t_{i,1},
\end{equation}
which is not continuous at $\Delta t_{i,1}\to 0$.
Thus, \eqref{eq:v-ode} may admit multiple particular solutions, and Theorem \ref{thm:no-overshoot} restricts us to the set of initial states that guarantees our solution doesn't overshoot the distance $\Delta p_i$.



Theorem \ref{thm:no-overshoot} gives an upper bound on $\Delta t_{i,1}$ that is too restrictive for most cases.
Instead, we use this result to determine the braking behavior of CAV $i$.
Namely, if \eqref{eq:cross-time-limit} is satisfied, the CAV brakes as normal.
Otherwise, we relax \eqref{boundary2} and use the rear-end constraint \eqref{eq:rear-end} to stop the vehicle, i.e., we place a virtual stopped vehicle at the intersection. 
This switching behavior is depicted in Fig. \ref{fig:switching}.

\begin{figure}[ht]
    \centering
    \begin{tikzpicture}[
        block/.style={draw,minimum width=3cm,thick},
        arrow/.style={->,thick}]
        \node[block] (KIN) at (0,0) {Rear-end brake \eqref{CBF_rear_end}};
        \node[block] (REA) at (0,-1) {Cros. time brake \eqref{boundary2}};
        \draw[arrow] (KIN.0) to [bend left] node[right]{$\Delta t_{i,1} \leq \sqrt{\frac{2 \Delta p_i}{u_{\max}}}$} (REA.0);
        \draw[arrow] (REA.180) to [bend left] node[left]{$\Delta t_{i,1} > \sqrt{\frac{2 \Delta p_i}{u_{\max}}}$} (KIN.180);
    \end{tikzpicture}
    \caption{Braking behavior for the CAV approaching the stop light.}
    \label{fig:switching}
\end{figure}

Note that chattering between the two braking formulas is undesirable, as it can lead to continuous fluctuations in the control bounds. Next, we show this switching behavior does not lead to any chattering between the two braking formulas \eqref{CBF_rear_end} and \eqref{boundary2}.

\begin{proposition} \label{prp:chatter}
   The switch from crossing time stopping to rear-end stopping occurs at most once.
\end{proposition}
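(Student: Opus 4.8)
The plan is to reduce the mode logic to a single scalar quantity and show that the crossing-time region is forward invariant under the closed-loop flow, so that an upward switch into rear-end braking can only be produced by a discrete reselection of the crossing interval, which I then argue can act at most once. Define the \emph{switching function}
\[
    w_i(t) := \Delta t_{i,1} - \sqrt{\frac{2\Delta p_i}{u_{\max}}},
\]
so that, by Fig.~\ref{fig:switching} and \eqref{eq:cross-time-limit}, crossing-time braking \eqref{boundary2} is active precisely when $w_i \le 0$ and rear-end braking \eqref{CBF_rear_end} when $w_i > 0$. A switch from crossing-time stopping to rear-end stopping is therefore exactly an upward crossing of $w_i$ through $0$, and the claim is equivalent to showing that the set $\{w_i \le 0\}$ is left at most once.

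First I would establish that $\{w_i \le 0\}$ is forward invariant along the continuous dynamics. Using $\dot{\Delta t}_{i,1} = -1$ and $\dot{\Delta p}_i = -v_i$ and differentiating gives
\[
    \dot{w}_i = -1 + \frac{v_i}{\sqrt{2u_{\max}\Delta p_i}}.
\]
On the boundary $w_i = 0$ one has $\Delta t_{i,1} = \sqrt{2\Delta p_i/u_{\max}}$, equivalently $\sqrt{2u_{\max}\Delta p_i} = u_{\max}\Delta t_{i,1}$ and $\Delta p_i = \tfrac12 u_{\max}\Delta t_{i,1}^2$. Since crossing-time braking is active there, \eqref{boundary2} enforces \eqref{eq:kin-2} by the forward-invariance argument of Theorem~1, whence $v_i \le \Delta p_i/\Delta t_{i,1} + \tfrac12 u_{\max}\Delta t_{i,1} = u_{\max}\Delta t_{i,1}$. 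Substituting into the displayed expression yields $\dot w_i \le -1 + 1 = 0$ on $w_i = 0$. A Nagumo-type argument then shows $w_i$ cannot increase through $0$, so the continuous flow never switches crossing-time braking into rear-end braking. This is the technical heart of the proof, and it is exactly the no-overshoot structure of Theorem~\ref{thm:no-overshoot} that makes the crossing-time region absorbing.

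It then remains to handle the only other mechanism that can raise $w_i$: a discrete reselection of the crossing interval (Lemma~\ref{traffic_feasibility}), which replaces $\underline t_i^k$ by a strictly later value and hence jumps $\Delta t_{i,1}$, and with it $w_i$, upward. I would argue this triggers the switch at most once by examining the rear-end phase: once $w_i>0$ the vehicle brakes toward the virtual stopped vehicle at $p_i^{tr}$, so $\Delta p_i$ decreases to the standstill gap $\gamma_i$ and $v_i \to 0$. From a near-standstill at short range the targeted interval remains feasible in the sense of \eqref{condition1}, so no further reselection occurs; with the interval fixed, $\Delta t_{i,1}$ decreases monotonically while $\Delta p_i$ stays essentially constant, so $w_i$ decreases through $0$ exactly once (a rear-end $\to$ crossing switch), after which $\{w_i\le 0\}$ is never left again by the invariance established above and the CAV clears the light.

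The step I expect to be the main obstacle is this last one: rigorously excluding a \emph{second} reselection-induced upward jump. The sign computation giving $\dot w_i\le 0$ on the boundary is clean and is forced by the very constraint \eqref{boundary2} that defines the crossing-time mode; the delicate part is coupling the reselection rule of Lemma~\ref{traffic_feasibility} with the braking mode, i.e. certifying that once the vehicle commits to stopping, the interval it waits for stays feasible under \eqref{condition1} and \eqref{eq:contro_bounds}, so that no further upward crossing of $w_i$ can occur before the intersection is cleared.
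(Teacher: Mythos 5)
Your proof is correct and takes a genuinely different route from the paper's. Both arguments revolve around the same switching function $w_i=\Delta t_{i,1}-\sqrt{2\Delta p_i/u_{\max}}$, the same derivative \eqref{eq:sandwich_derivative}, and the same critical speed $\sqrt{2u_{\max}\Delta p_i}$, but you invoke the speed bound on opposite sides of the switch. The paper reasons \emph{after} the switch: in rear-end mode the virtual-vehicle constraint \eqref{rear_end_first_order_cbf} yields \eqref{helper}, hence $w_i$ is non-increasing, the CAV brakes to a stop at the line and waits, and once $\Delta t_{i,1}$ elapses the switching condition \eqref{eq:zeno-ineq} can never fire again. You reason \emph{before} the switch: because \eqref{boundary2} renders \eqref{eq:kin-2} forward invariant, and on the boundary $w_i=0$ the bound in \eqref{eq:kin-2} collapses (by AM--GM, with equality exactly there) to $v_i\le u_{\max}\Delta t_{i,1}=\sqrt{2u_{\max}\Delta p_i}$, you get $\dot w_i\le 0$ on $\{w_i=0\}$, so $\{w_i\le 0\}$ is forward invariant under the continuous flow and a crossing-to-rear-end switch can only be produced by a discrete reselection of $\underline{t}_i^k$, which you then argue acts at most once. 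Your invariance lemma buys something the paper's proof lacks: if $w_i$ returns to zero while the vehicle is still moving, the paper's conclusion that the vehicle ``remains stopped until $\Delta t_{i,1}$ elapses'' does not apply, and excluding a re-switch in that scenario requires precisely your argument. Conversely, the paper's stop-and-wait analysis is the rigorous version of the step you flag as your main obstacle---the post-jump behavior---and it sidesteps the reselection logic entirely (a congested setting, where a later interval must be re-requested after the window opens, is the one case neither proof fully closes). The two arguments are thus complementary halves of a fully airtight proof of Proposition~\ref{prp:chatter}. Two minor points: the substitution $\delta_i(t)-p_i-\gamma_i\mapsto\Delta p_i$ places the stop at the line itself, so $\Delta p_i\to 0$ rather than to a gap $\gamma_i$; and your invariance step implicitly assumes \eqref{eq:kin-2} holds when crossing-time mode is entered, which is true at zone entry and at any switch-back (where the rear-end bound \eqref{helper} coincides with the \eqref{eq:kin-2} bound) but deserves an explicit sentence.
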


\begin{proof}
    For switching to occur, the time $\Delta t_{i,1}$ must satisfy,
    \begin{equation} \label{eq:zeno-ineq}
        \Delta t_{i,1} - \sqrt{\frac{2\Delta p_i}{u_{\max}}} > 0,
    \end{equation}
    i.e., it is sufficiently large to violate Theorem \ref{thm:no-overshoot}. When \eqref{eq:zeno-ineq} is satisfied, we consider a virtual stopped vehicle at the intersection. So in \eqref{rear_end_first_order_cbf}, we can substitute $\delta_i(t)-p_i-\gamma_i$ with $\Delta p_i$. Hence, \eqref{eq:speed_bounds} and \eqref{rear_end_first_order_cbf} imply that the speed of the CAV is bounded by,
    \begin{equation}\label{helper}
        0 \leq \dot{p} \leq \sqrt{2u_{\max}\Delta p_i}.
    \end{equation}
    Next note that the derivative of $\Delta t_{i,1} - \sqrt{\frac{2\Delta p_i}{u_{\max}}}$ is given by,
        \begin{equation} \label{eq:sandwich_derivative}
        - 1+\dot{p}\,\sqrt{\frac
        {1}{2u_{\max}\Delta p_i}.
        }
    \end{equation}
    Using \eqref{helper} to bound $\dot{p}$ in  \eqref{eq:sandwich_derivative} yields,
    \begin{equation} \label{eq:sandwich}
        -1 \leq -1 + \dot{p}\,\sqrt{\frac
        {1}{2u_{\max}\Delta p_i}
        } 
        \leq
        \sqrt{1} - 1 = 0,
    \end{equation}
    which implies that the LHS of \eqref{eq:zeno-ineq} is non increasing as $\Delta p_i \to 0$.
    At the limit $\Delta p_i = 0$, $\dot{p} = 0$ by \eqref{helper}.
    Thus the vehicle remains stopped until $\Delta t_{i,1}$ elapses.
\end{proof}

Note that all the conditions defined thus far for determining the crossing time interval do not explicitly account for the satisfaction of speed constraints. 
In particular, \eqref{boundary1} may generate a lower bound that requires $v_i > v_{\max}$.
This would lead the CAV to accelerate and reach $v_{\max}$, the condition in Lemma 2 would eventually become false, and the CAV must then request a later stopping time and brake.
To avoid this situation, we impose another condition on the feasible crossing time, which we present next.
\begin{lemma} \label{lma:vmax}
    When neglecting HDVs, for a crossing time interval $[\underline{t}_i^k, \overline{t}_i^k]$, the condition
    \begin{equation}
        \Delta t_{i,2} \geq 
        \frac{2 \Delta p_i }{v_{\max}+v_i}
    \end{equation}
    where $\Delta t_{i,2} = \overline{t}_i^k - t$,
    is sufficient to ensure the CAV can reach the intersection while respecting the speed bounds \eqref{eq:speed_bounds}.
\end{lemma}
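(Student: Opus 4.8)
The plan is to recognize the right-hand side of the stated inequality as the time required to cover the remaining distance $\Delta p_i$ at a specific average speed, and then to exhibit an explicit, speed-admissible trajectory that reaches the intersection by the deadline $\bar{t}_i^k$. First I would rearrange the hypothesis $\Delta t_{i,2} \geq \frac{2\Delta p_i}{v_{\max}+v_i}$ into the equivalent form $\frac{\Delta p_i}{\Delta t_{i,2}} \leq \frac{v_i + v_{\max}}{2}$, so that the left side is the mean speed needed to cover $\Delta p_i$ within the remaining time $\Delta t_{i,2}$. The quantity $\tfrac{1}{2}(v_i+v_{\max})$ is precisely the mean speed of a constant-acceleration ramp that carries CAV $i$ from its current speed $v_i$ up to the speed limit $v_{\max}$, which is the largest mean speed attainable by any constant-acceleration profile that never exceeds $v_{\max}$.

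Next I would construct this candidate profile explicitly: apply the constant acceleration $a = (v_{\max}-v_i)/\Delta t_{i,2} \ge 0$ over the interval $[t,\bar{t}_i^k]$, so that the speed increases monotonically from $v_i$ to $v_{\max}$. Since the current speed already satisfies $v_i \in [0,v_{\max}]$ (maintained by the speed barrier \eqref{eq:speed_bounds} through \eqref{speed_CBF_1} and \eqref{speed_CBF_2}), the resulting speed stays in $[v_i,v_{\max}] \subseteq [0,v_{\max}]$ for the entire interval, so the profile respects the speed bounds by construction. Integrating, the distance covered is the area under the linear speed curve, namely $\tfrac{1}{2}(v_i+v_{\max})\,\Delta t_{i,2}$, and the rearranged hypothesis yields $\tfrac{1}{2}(v_i+v_{\max})\,\Delta t_{i,2} \ge \Delta p_i$. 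Hence this admissible trajectory covers at least $\Delta p_i$ by time $\bar{t}_i^k$, which establishes that a feasible crossing exists without ever violating $v_{\max}$; in particular the arrival barrier \eqref{boundary1} cannot demand $v_i > v_{\max}$, ruling out the pathological escalation flagged before the lemma.

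The step I expect to be the main obstacle is making precise the relationship between this \emph{existence} argument and the control-bound feasibility established separately in Lemma~\ref{traffic_feasibility}: the ramp acceleration $a$ need not satisfy $a \le u_{\max}$, so conditions \eqref{condition1} and the present one play complementary roles, the former guaranteeing reachability under the acceleration limit and the latter guaranteeing reachability under the speed limit. I would therefore phrase the conclusion as sufficiency for speed-bound compatibility alone, and argue that because a trajectory exists that simultaneously honors $v \le v_{\max}$ and arrives by $\bar{t}_i^k$, the arrival barrier \eqref{boundary1} and the speed barrier cannot be mutually infeasible. A secondary care point is that the constructed profile may \emph{overshoot} $\Delta p_i$ strictly before $\bar{t}_i^k$; I would note that this is harmless here, since Lemma~\ref{lma:vmax} only certifies timely reachability, while the lower-bound timing within $[\underline{t}_i^k,\bar{t}_i^k]$ is governed by \eqref{boundary2} and Theorem~\ref{thm:no-overshoot}.
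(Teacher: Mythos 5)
Your proof is correct, and it rests on the same elementary fact as the paper's: a constant-acceleration profile covers distance at the average speed $\tfrac{1}{2}(v_i+v_f)$. The two arguments are duals of one another. The paper fixes the \emph{distance}: it takes the constant acceleration $u = 2\,(\Delta p_i - v_i\Delta t_{i,2})/(\Delta t_{i,2})^2$ that arrives exactly at $\bar{t}_i^{k}$, observes that for $u>0$ the maximum speed occurs at the crossing, and imposes $v_i + u\,\Delta t_{i,2} \le v_{\max}$, which rearranges to the stated condition. You instead fix the \emph{terminal speed} at $v_{\max}$ and show the resulting ramp covers at least $\Delta p_i$ by the deadline. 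Both deliver the lemma, but the paper's construction has two small advantages: the exhibited trajectory arrives exactly at $\bar{t}_i^{k}$, so there is no overshoot to excuse, and its acceleration automatically satisfies $u \le u_{\max}$ whenever the entry condition of Lemma~\ref{traffic_feasibility} holds (the paper notes this ``by construction of $\Delta t_{i,2}$'', citing \eqref{eq:kin-2}), so one and the same trajectory certifies compatibility with the speed bound \emph{and} the control bound. Your ramp acceleration $(v_{\max}-v_i)/\Delta t_{i,2}$ can exceed $u_{\max}$ when $\Delta p_i$ is small, which you correctly flag and handle by scoping the conclusion to speed-bound sufficiency only; since that is all the lemma claims, your proof is complete as stated --- it simply keeps the two feasibility conditions separate where the paper's single trajectory couples them.
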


\begin{proof}
Consider a CAV $i$, the current state $\Delta p_i, v_i$, and the kinematic relation,
\begin{equation}
\Delta p_i = v_i \Delta t_{i,2} + \frac{1}{2} u (\Delta t_{i,2})^2,
\label{eq:kinematic_relation1}
\end{equation}
that determines the constant acceleration $u$ required to travel a distance $\Delta p_i$ by time $\Delta t_{i,2}$.
Solving \eqref{eq:kinematic_relation1} for $u$ yields,
\begin{equation} \label{eq:u-bound-vmax}
    u = 2 \frac{\Delta p_i - v_i\Delta t_{i,2}}{(\Delta t_{i,2})^2},
\end{equation}
which satisfies $u \leq u_{\max}$ by construction of $\Delta t_{i,2}$ (see \eqref{eq:kin-2}).
For $u \leq 0$, the following speed trajectory is decreasing and the result holds trivially.
For $u > 0$, the maximum speed occurs at the crossing; thus it is necessary and sufficient to impose,
\begin{equation}
    v_{\max} \geq v_f = v_0 + u \Delta t_{i,2}.
\end{equation}
Rearranging and substituting \eqref{eq:u-bound-vmax} completes the proof.
%


\end{proof}

Note that Lemma \ref{lma:vmax} is conservative, as it only guarantees the existence of a trajectory that crosses the intersection without reaching $v_{\max}$.
In particular, we exclude the signal phases that \textit{require} traveling at $v=v_{\max}$ for crossing to be feasible.
While this could be relaxed by explicitly including the speed bounds in Lemma \ref{lma:primitives}, we expect that, for mixed traffic at a crowded intersection, there will inevitably be delays that will require the CAV to request a later stopping time and quickly brake.

At this stage, we are equipped to present our optimization problem along with the complete control strategy for the CAVs, which is defined in Algorithm~1. Following this, we prove that our proposed control policy is always feasible.
\\
\textbf{Problem 1:} At each time, each CAV $i$ applies the control action
that satisfies,
\begin{align}\label{Time_optimal}
 \underset{u_i}{\min}& \quad \bigg(\phi(v_i^d-v_i)-u_i)\bigg)^2  \nonumber\\
\text{subject to}& \nonumber\\ \nonumber
\text{control bounds:}&\;\eqref{eq:contro_bounds},\nonumber\\
\text{speed constraints:}&\;\eqref{speed_CBF_1},\;\eqref{speed_CBF_2}, \nonumber\\ 
\text{rear-end constraints:}&\;\eqref{CBF_rear_end},\nonumber\\ 
\text{crossing time constraints:}&\;\eqref{boundary1},\;\eqref{boundary2}, \nonumber \\
\text{given:}&\; [\underline{t}_i^k, \overline{t}_i^k], v_i^d, \phi. \nonumber 
\end{align}

\begin{theorem}[Solution existence]   \label{thm:feasibility}
For a large value of $\kappa_s$ in the speed barrier conditions \eqref{speed_CBF_1}, \eqref{speed_CBF_2},
the set of control inputs,
\[
\mathcal{U}_i(t)\; :=\;
      \underbrace{[-u_{\max},u_{\max}]}_{\mathcal{U}_{\text{ctrl}}}
      \cap\
      \underbrace{\mathcal{U}_{\mathrm{speed}}}_{\text{\eqref{speed_CBF_1}--\eqref{speed_CBF_2}}}
      \cap\
      \underbrace{\mathcal{U}_{\mathrm{rear}}(t)}_{\text{\eqref{CBF_rear_end}}}
      \cap\
      \underbrace{\mathcal{U}_{\mathrm{cross}}(t)}_{\text{\eqref{boundary1},\eqref{boundary2}}}
\]
is a non-empty interval; consequently a solution to Problem 1, and the control action $u_i$, always exists.
\end{theorem}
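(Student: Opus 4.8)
The plan is to exploit that every constraint in Problem~1 is affine in the scalar input $u_i$, so each of the four sets $\mathcal{U}_{\mathrm{ctrl}}$, $\mathcal{U}_{\mathrm{speed}}$, $\mathcal{U}_{\mathrm{rear}}(t)$, $\mathcal{U}_{\mathrm{cross}}(t)$ is a (possibly unbounded) closed interval on the real line, and a finite intersection of intervals is again an interval. Consequently $\mathcal{U}_i(t)$ is an interval, and once it is shown to be non-empty the quadratic objective $(\phi(v_i^d-v_i)-u_i)^2$ attains its minimum at the projection of the nominal input $\phi(v_i^d-v_i)$ onto $\mathcal{U}_i(t)$, which establishes solution existence. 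Thus it suffices to prove $\mathcal{U}_i(t)\neq\varnothing$, i.e., that $\max\{L_1,L_2,L_3\}\le\min\{U_1,U_2,U_3,U_4\}$, where $L_1=-u_{\max}$ and $U_1=u_{\max}$ come from \eqref{eq:contro_bounds}, $L_2=\kappa_s(v_{\min}-v_i)$ and $U_2=\kappa_s(v_{\max}-v_i)$ from \eqref{speed_CBF_2}, \eqref{speed_CBF_1}, $U_3$ is the rear-end bound \eqref{CBF_rear_end}, and $L_3,U_4$ are the crossing-time bounds \eqref{boundary1}, \eqref{boundary2}. Checking every lower--upper pair is the core of the argument.

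Several pairs I would dispatch directly from forward invariance and the choice of gains. By Theorem~1 applied to the speed barriers, $v_i\in[v_{\min},v_{\max}]=[0,v_{\max}]$ for all $t$, so $L_2\le 0\le U_2$; taking $\kappa_s$ large makes $L_2\ll -u_{\max}$ and $U_2\gg u_{\max}$ whenever $v_i$ lies strictly inside the speed range, so $\mathcal{U}_{\mathrm{speed}}$ contains $[-u_{\max},u_{\max}]$ there and restricts only one direction at the two boundary speeds (where it still contains $0$). This settles $L_1\le U_2$, $L_2\le U_1$, and $L_2\le U_2$, and reduces every remaining $L_2$/$U_2$ pair to its crossing-bound counterpart. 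For the rear-end bound I would invoke the defining property of the minimum-stopping-distance constraint \cite{beaver2021constraint}, namely that $-u_{\max}\in\mathcal{U}_{\mathrm{rear}}(t)$ always, so $U_3\ge -u_{\max}=L_1$, and at a standstill $u_i=0$ remains admissible so $U_3\ge 0$. The control pair $L_1\le U_1$ is immediate.

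The crux is the crossing-time lower bound $L_3$ from \eqref{boundary1}, the only term that pushes $u_i$ upward and hence the only one that can conflict with an upper bound. I would show $L_3\le U_1$ directly from Lemma~\ref{traffic_feasibility}, since condition \eqref{condition1} is precisely the statement that the acceleration required to reach $p_i^{tr}$ by $\bar t_i^k$ does not exceed $u_{\max}$, so \eqref{boundary1} never rises above $u_{\max}$. Next, $L_3\le U_2$ follows from Lemma~\ref{lma:vmax}, whose condition guarantees a crossing trajectory that never reaches $v_{\max}$, so \eqref{boundary1} does not demand a speed beyond the admissible range and is compatible with the large-$\kappa_s$ speed bound. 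The pair $L_3\le U_4$ reduces to $\underline{t}_i^k\le\bar t_i^k$, i.e. $\Delta t_{i,1}\le\Delta t_{i,2}$, which holds because $[\underline{t}_i^k,\bar t_i^k]$ is a genuine interval, making the two-sided kinematic window of Lemma~\ref{lma:primitives} non-degenerate.

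The main obstacle, and the step I expect to require the most care, is the pair $L_3\le U_3$: clearing the light on time versus braking for a slow preceding vehicle. Here I would argue that whenever $U_3$ falls below $L_3$, the currently selected interval is infeasible and the controller reselects a later one; increasing $\Delta t_{i,2}$ decreases \eqref{boundary1} monotonically, while Lemma~\ref{existence_of_feasible_intervals} (under Assumption~4) guarantees a later admissible interval always exists, so reselection eventually drives $L_3$ below $U_3$. Finally, to rule out a simultaneous clash between the crossing-after bound \eqref{boundary2} and the rear-end bound, I would appeal to the switching logic of Theorem~\ref{thm:no-overshoot} with Proposition~\ref{prp:chatter}: when \eqref{eq:cross-time-limit} fails, \eqref{boundary2} is relaxed in favor of a virtual-vehicle rear-end stop, and since that switch occurs at most once the two upper bounds never actively compete. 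Collecting all pairs then yields $\max\{L_1,L_2,L_3\}\le\min\{U_1,U_2,U_3,U_4\}$, so $\mathcal{U}_i(t)$ is a non-empty closed interval and Problem~1 is feasible.
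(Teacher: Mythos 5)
Your proposal is correct and follows essentially the same route as the paper's proof: the speed, control, and rear-end bounds are shown to be mutually compatible (each admitting $0$ or $-u_{\max}$), conflicts with the crossing lower bound \eqref{boundary1} are resolved by reselecting a later interval under Assumption~4 (Lemma~\ref{existence_of_feasible_intervals}), and conflicts with the crossing upper bound \eqref{boundary2} are resolved by the braking switch of Theorem~\ref{thm:no-overshoot}. Your pairwise bound-by-bound organization and the explicit appeals to Lemma~\ref{traffic_feasibility} and Lemma~\ref{lma:vmax} are only presentational refinements of the paper's sequential-intersection argument, not a different method.
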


\begin{proof}
We prove existence by computing the set intersections from left to right.
\paragraph{Control bounds}
By \eqref{eq:contro_bounds}, $\mathcal U_{\mathrm{ctrl}}=[-u_{\max},u_{\max}]$ is a non-empty interval by definition.

\paragraph{Speed limits}
For large values of $\kappa_s$, the speed barrier conditions \eqref{speed_CBF_1}–\eqref{speed_CBF_2} approximate a step function with the feasible space,
\[
\mathcal U_{\mathrm{ctrl}}\cap\mathcal U_{\mathrm{speed}}=
\begin{cases}
[-u_{\max},0], & v_i=v_{\max},\\[2pt]
[0,u_{\max}], & v_i=0,\\[2pt]
[-u_{\max},u_{\max}], & \text{else},
\end{cases}
\]
which is always non-empty because $0\in[-u_{\max},u_{\max}]$.

\paragraph{Rear-end safety}
The rear-end safety barrier condition \eqref{CBF_rear_end} imposes an upper bound $u_{rear} \geq -u_{\max}$ by design of \eqref{CBF_rear_end}.
When $v_i = 0$, by construction $u_{rear} \geq 0$ x, which implies,
\begin{equation*}
    \mathcal U_{\mathrm{ctrl}}\cap\mathcal U_{\mathrm{speed}}
      \cap\mathcal U_{\mathrm{rear}}(t)
      =[0,\,\min\{0,u_{\mathrm{rear}}\}]=\{0\}.
\end{equation*}
When $v_i \geq 0$,
\begin{equation}
    \mathcal U_{\mathrm{ctrl}}\cap\mathcal U_{\mathrm{speed}}      \cap\mathcal U_{\mathrm{rear}}(t)
      =[-u_{\max},\,\min\{0,u_{\mathrm{rear}}\}],
\end{equation}
which is a non-empty interval.

\paragraph{Crossing-time window} 
The crossing time barrier conditions \eqref{boundary1}, \eqref{boundary2} yield a non-empty interval
\begin{equation}
\mathcal{U}_{\mathrm{cross}}(t)=
      [\underline u_{\mathrm{cross}}(t),\overline u_{\mathrm{cross}}(t)].
\end{equation}
The overall feasible set can only become infeasible in two cases.
First, if
\begin{align} \label{eq:cond-1}
    \underline{u}_{\mathrm{cross}}(t) > \min\{0, u_{rear}(t)\},
\end{align}
we select a later crossing interval, which always exists under Assumption 4.
That is, we decrease $\Delta t_{i,1}$, and therefore $\underline{u}_{\mathrm{cross}}(t)$, until \eqref{eq:cond-1} no longer holds.
Second, if
\begin{align} \label{eq:cond-2}
    \overline{u}_{\mathrm{cross}}(t) < 
    \begin{cases}
        0 &\text{ if } v_i = 0, \\
        -u_{\max} &\text{ otherwise}.
    \end{cases}
\end{align}
then there is no feasible solution.
By Theorem \ref{thm:no-overshoot}, we switch braking strategies and set $u_{\mathrm{cross}}(t)$ to the rear-end stopping constraint when $\Delta t_{i,1}$ satisfies \eqref{thm:no-overshoot}.
This implies $u_{\mathrm{cross}}(t)\geq u_{\mathrm{rear}}(t)$, where equality is strict if there is no preceding vehicle.
By definition, this does not satisfy \eqref{eq:cond-2}, and thus a feasible action exists.
\end{proof}

\begin{algorithm}[ht]
\caption{CAV control algorithm}
\label{alg:cav_control}
\DontPrintSemicolon

Find the smallest $k$ such that $[\underline t_i^k,\overline t_i^k]$ satisfies \eqref{condition1}\;

\While{$p_i\le p_i^{\mathrm{tr}}\; \&\; t\geq t_i^0$}{
    Compute distance $\Delta p_i$\;
    Compute $\Delta t_{i,1}$ and $\Delta t_{i,2}$\;

    \If{condition \eqref{eq:cross-time-limit} is \emph{not} satisfied}{
        $k \leftarrow k+1$\;
        Update $[\underline t_i^k,\overline t_i^k]$ until condition~\eqref{eq:cross-time-limit} holds\;
    }

    \If{$\Delta t_{i,2} > \sqrt{2\Delta p_i/u_{\max}}$}{
        Replace \eqref{boundary2} with \eqref{CBF_rear_end} and substitute $\delta_i(t)=p_i^{\mathrm{tr}}$\;
    }

    \If{\textbf{Problem~1} is feasible}{
        Apply \textbf{Problem~1}\;
    }
    \Else{
        $k \leftarrow k+1$\;
        Update $[\underline t_i^k,\overline t_i^k]$
    }
}
    Apply \textbf{Problem~1} \emph{excluding} crossing-time constraints
        \eqref{boundary1}–\eqref{boundary2}\;
\end{algorithm}

\subsection{Human Driven Vehicles} \label{HDVs}
In order to incorporate human driver behavior in our scenario, we utilized a well-known car following model of the literature, the Intelligent Driver Model (IDM) model; see \cite{albeaik2022limitations}. The acceleration $u_i$ of a vehicle based on IDM is given by,
\begin{equation}
u_i = u_{\max} \left[ 1 - \left( \frac{v_i}{v_{\text{des}}} \right)^{\xi} - \left( \frac{s^*(v_i, \Delta v_i)}{s_i(t)} \right)^2 \right]
\end{equation}
where: $v_i$ is the current speed of the vehicle, $v_{\text{des}}$ is the desired speed, $\xi$ is a model parameter, $s_i(t)$ is the gap to the leading vehicle, $\Delta v_i = v_i - v_{\text{lead}}(t)$ is the speed difference between the vehicle and the vehicle ahead. The desired minimum gap \( s^*(v_i, \Delta v) \) is given by: $s^*(v_i, \Delta v_i) = \gamma_i + v T + \frac{v_i \Delta v_i}{2 \sqrt{u_{\max} \beta}}$, $T$ is the desired time headway, and $\beta$ is the comfortable deceleration.\



\section{Handling right turns at red traffic lights}\label{Handling right turns}

In our framework, we also account for situations where vehicles are allowed to turn right at a red traffic signal. This scenario effectively becomes a merging problem, with the key requirement that vehicles must first stop at the red light to check for oncoming traffic before proceeding.

For HDVs, this behavior can be readily modeled using built-in procedures from traffic simulation tools such as SUMO or VISSIM, which emulate driver behavior at stop signs. In contrast, for CAVs, we design a dedicated merging strategy.

\begin{remark}
While our previous work \cite{le2024stochastic} addressed the problem of merging in mixed traffic environments, it required the assumption that CAVs had complete awareness of the state of other HDVs and CAVs in the network. In contrast, we consider a more restrictive and realistic scenario: CAVs can only communicate with traffic lights and perceive the state of surrounding vehicles using onboard sensors.
\end{remark}

Our control strategy for CAVs in this context is built on a simple but effective principle. We estimate two quantities:
\begin{itemize}
    \item The time required for the CAV to reach the \textit{conflict point} (i.e., the point where the main and merging roads meet),
    \item The time required for the nearest oncoming vehicle (if one exists) on the main road to reach the same conflict point.
\end{itemize}
By comparing these two time headways, we determine whether it is safe for the CAV to merge.

Consider the setup illustrated in Fig.~\ref{fig:merging}, where vehicle $i$ travels on the main road and vehicle $j$ is on the secondary road intending to merge. Suppose vehicle $j$ is a CAV. Let $d_j$ denote the distance between the traffic signal and the conflict point,
and let $d_i$ denote the distance between vehicle $i$ and the conflict point.

At the moment when CAV $j$ is at rest and checks the upcoming flow, we define $\tau_j$ as the time required for CAV $j$ to reach the merging point while accelerating at a rate of $\phi(v_j - v_j^d)$, as discussed in \eqref{unconstrained trajectry}. However, this acceleration is not constant, and that means we need to solve an ODE to identify the time $\tau_j$ that CAV $j$ will reach the merging point, that is, to cover the distance $d_j$. 

Specifically, since the acceleration follows a feedback law of the form $\dot{v}_j = \phi (v_j^d - v_j)$, we solve this first-order linear ODE with the initial condition $v_j(0) = 0$. The resulting velocity profile is:
\begin{equation}
v_j(t) = v_j^d (1 - e^{-\phi t}).
\end{equation}
By integrating this expression, we obtain the position of CAV $j$ over time:
\begin{equation}
p_j(t) = \int_0^t v_j(\tau) \, d\tau = v_j^d \left[t + \frac{1}{\phi}(e^{-\phi t} - 1)\right].
\end{equation}
Then, to determine the required time $\tau_j$, we solve the following equation,
\begin{equation} \label{ODE_upcoming}
d_j = v_j^d \left[\tau_j + \frac{1}{\phi}(e^{-\phi \tau_j} - 1)\right],
\end{equation}
where the solution is the Lambert $\mathit{W}$ function \cite{corless1996lambert}.
Additionally, 
a numerical root-finding method (e.g., Newton--Raphson) may be used to determine $\tau_j$. Yet, this can be readily done in milliseconds using a commercial software like MATLAB.

\begin{remark}
Note that \eqref{ODE_upcoming} reflects the scenario in which the CAV has come to a complete stop to check the oncoming traffic. However, it would be possible to relax this assumption and consider that the CAV makes its decision earlier while still in motion. In that case, in equation \eqref{ODE_upcoming}, we must also consider the current speed speed of the CAV.
\end{remark}

Next, we estimate the time headway $\tau_i$ of vehicle $i$ on the main road. We assume that the CAV’s sensors can accurately estimate the position and velocity of vehicle $i$. To ensure robustness in the decision-making process, we consider the extreme scenario in which vehicle $i$ accelerates with its maximum allowable acceleration. The resulting time headway $\tau_i$ is characterized in the following proposition.

\begin{proposition}
Consider vehicle $i$ at time $t_i^c$ that constantly accelerates. Then, the fastest time headway $\tau_i$ is given by:
\begin{equation}
    \tau_i =
\begin{cases}
\displaystyle \frac{-v_i^c + \sqrt{(v_i^c)^2 + 2u_{\max} d_i}}{u_{\max}},& \text{if } d_i \le \dfrac{v_{\max}^2 - (v_i^c)^2}{2u_{\max}}, \\[1em]
\displaystyle \frac{v_{\max} - v_i^c}{u_{\max}} + \frac{d_i - \frac{v_{\max}^2 - (v_i^c)^2}{2u_{\max}}}{v_{\max}},& \text{if } d_i > \dfrac{v_{\max}^2 - (v_i^c)^2}{2u_{\max}},
\end{cases}
\end{equation}
where $v_i^c$ is the velocity at time $t_i^c$, that is, the time where the vehicle from the secondary road checks the upcoming flow.
\end{proposition}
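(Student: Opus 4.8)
The plan is to treat this as an elementary minimum-time kinematics problem: to obtain the \emph{fastest} (i.e., smallest) time headway $\tau_i$, vehicle $i$ should apply its maximum acceleration $u_{\max}$ at every instant, subject only to the speed cap $v_i \le v_{\max}$ from \eqref{eq:speed_bounds}. I would first justify this profile by a standard comparison argument: among all admissible velocity trajectories starting at $v_i^c$ and respecting $\dot v \le u_{\max}$ and $v \le v_{\max}$, the one that accelerates maximally pointwise dominates every other in speed, and therefore in position, at each time; hence it reaches the conflict point no later than any competitor. This reduces the problem to integrating a single fixed control profile.

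Next I would locate the distance at which the vehicle exactly saturates the speed limit. From the constant-acceleration relation $v_{\max}^2 = (v_i^c)^2 + 2 u_{\max} d^*$, the distance consumed while accelerating from $v_i^c$ up to $v_{\max}$ is
\begin{equation}
    d^* = \frac{v_{\max}^2 - (v_i^c)^2}{2 u_{\max}},
\end{equation}
which is precisely the threshold separating the two branches in the proposition. The remainder of the argument then splits on whether the conflict point lies before or after this saturation distance.

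For the first case ($d_i \le d^*$), the vehicle is still accelerating when it arrives at the conflict point, so the whole traversal obeys $d_i = v_i^c \tau_i + \tfrac{1}{2} u_{\max} \tau_i^2$. I would solve this quadratic in $\tau_i$ and keep the positive root, yielding the first branch $\tau_i = \bigl(-v_i^c + \sqrt{(v_i^c)^2 + 2 u_{\max} d_i}\bigr)/u_{\max}$; the negative root is discarded since $\tau_i \ge 0$. For the second case ($d_i > d^*$), I would decompose the motion into an acceleration phase of duration $(v_{\max}-v_i^c)/u_{\max}$ covering distance $d^*$, followed by a constant-speed cruise at $v_{\max}$ over the remaining distance $d_i - d^*$, which takes time $(d_i - d^*)/v_{\max}$. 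Summing the two phase durations reproduces the second branch exactly.

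The computations are routine, so I do not anticipate a genuine obstacle. The only point requiring care is the justification in the first paragraph that the maximal-acceleration, speed-capped profile is indeed time-optimal; once that monotonicity/comparison fact is in hand, the rest is bookkeeping of the two kinematic phases and selecting the admissible (nonnegative) root of the quadratic.
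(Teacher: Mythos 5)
Your proposal is correct and follows essentially the same route as the paper: split on whether the conflict point lies before or after the saturation distance $d^* = \frac{v_{\max}^2 - (v_i^c)^2}{2u_{\max}}$, solve the quadratic $d_i = v_i^c\tau_i + \tfrac12 u_{\max}\tau_i^2$ for the positive root in the first case, and sum the acceleration-phase and cruise-phase durations in the second. The only difference is your opening comparison argument justifying that the speed-capped maximal-acceleration profile is time-optimal; the paper omits this and simply analyzes the constant-$u_{\max}$ profile stated in the proposition, so your addition is a harmless (and slightly more rigorous) preamble rather than a different method.
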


\begin{proof}
We analyze the motion of vehicle $i$ with current speed $v_i^c$, under constant acceleration $u_{\max}$, and subject to the speed limit $v_{\max}$. Note that $\dfrac{v_{\max}^2 - (v_i^c)^2}{2u_{\max}}$ is the distance the vehicle $i$ covers with maximum acceleration until it reaches the maximum speed $v_{\max}$. Let $d_i$ be the distance to be covered. We distinguish two cases. \textbf{Case 1: $d_i \le \dfrac{v_{\max}^2 - (v_i^c)^2}{2u_{\max}}$.} In this case the vehicle does not reach the speed limit before covering the distance $d_i$. Thus, the motion is governed by the kinematic equation $d_i = v_i^c\,\tau_i + \frac{1}{2}u_{\max}\,\tau_i^2$.
By the quadratic formula, the positive solution is $\tau_i = \frac{-v_i^c + \sqrt{(v_i^c)^2 + 2u_{\max}d_i}}{u_{\max}}.$ \textbf{Case 2: $d_i > \dfrac{v_{\max}^2 - (v_i^c)^2}{2u_{\max}}$.} Here the vehicle reaches the speed limit $v_{\max}$ before covering the entire distance $d_i$. Let $d_a = \frac{v_{\max}^2 - (v_i^c)^2}{2u_{\max}}$ be the distance covered during the acceleration phase from $v_i^c$ to $v_{\max}$. The time required for this phase is $t_a = \frac{v_{\max} - v_i^c}{u_{\max}}.$ After reaching $v_{\max}$, the vehicle covers the remaining distance $d_i - d_a$ at constant speed $v_{\max}$. The time for the cruising phase is
$t_c = \frac{d_i - d_a}{v_{\max}}.$
Thus, the total time headway is $
\tau_i = t_a + t_c = \frac{v_{\max} - v_i^c}{u_{\max}} + \frac{d_i - \frac{v_{\max}^2 - (v_i^c)^2}{2u_{\max}}}{v_{\max}}$. That completes the proof.
\end{proof}

Based on these two time headways, we consider the condition
\begin{equation} \label{merging_condition}
\tau_j \leq \tau_i - \tau_s
\end{equation}
which, when satisfied, allows the CAV on the secondary road to merge onto the main road, assuming that the maneuver poses no risk of collision. Here, $\tau_s \geq 0$ is an additional safety margin chosen by the designer.

\begin{figure}
    \centering
    \includegraphics[width=0.7\linewidth]{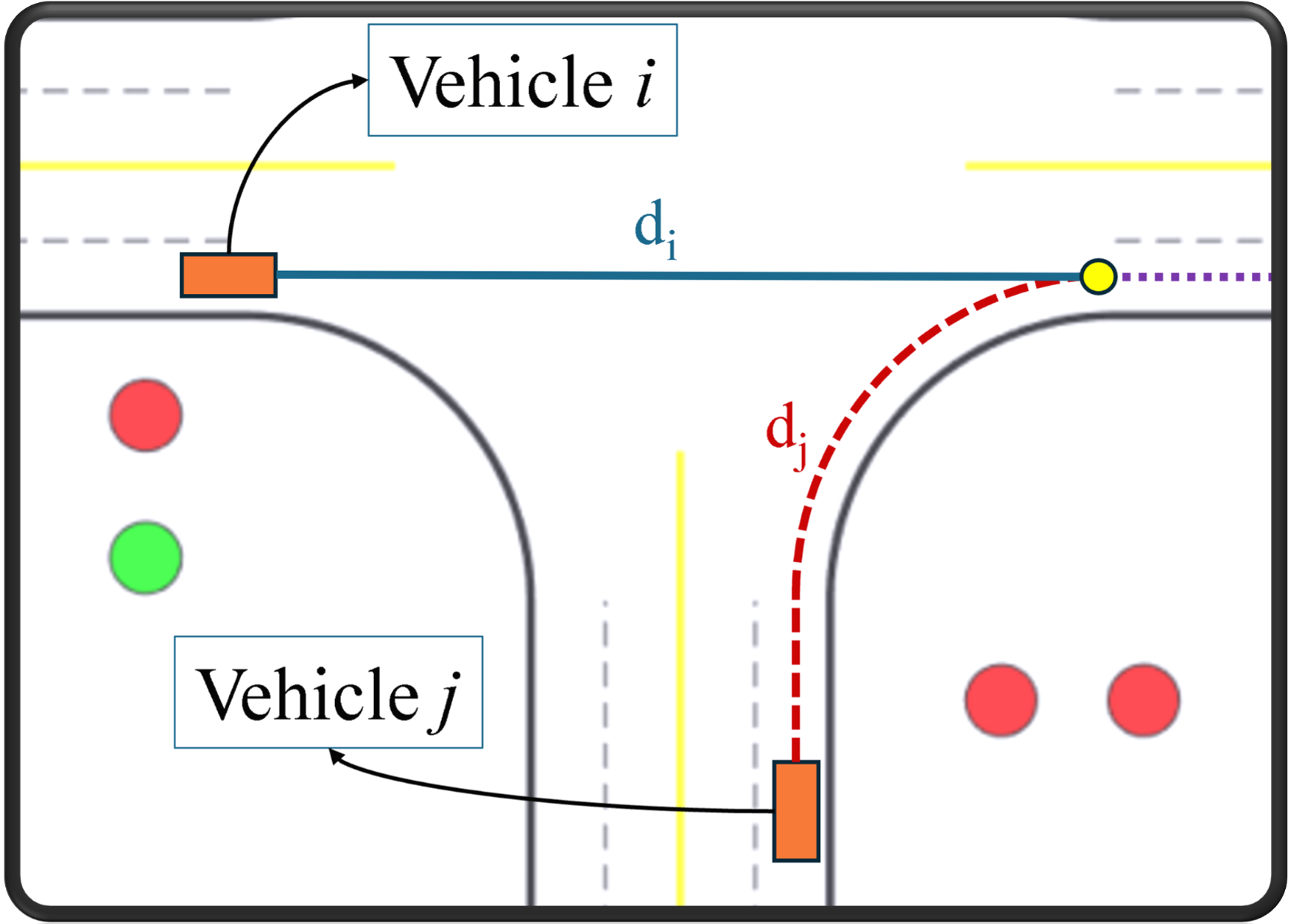}
    \caption{An example of a vehicle turning right,  checking the upcoming flow on red.}
    \label{fig:merging}
\end{figure}

\section{Simulation Results} \label{simulations}
In this section, we conduct numerical simulations using MATLAB. We consider a traffic scenario, shown in Fig. \ref{fig:Intersection}, where a traffic flow of 5000 vehicles per hour arrives at the intersection from all directions. The traffic light region is set to a range of 200 meters. We also set the following values for our parameters, $v_{\max}=22$m/s, $v_{\text{des}}=12$m/s, $u_{\max}=5$m/$\text{s}^2$, $\Delta t=0.05$ and $\tau_s=1.5$s.  We first examine in detail the performance and behavior of our controller under specific traffic conditions with the selection of the gains as $\phi=0.25$ $\kappa_T=0.04$, $\kappa_R=0.2$ and $\kappa_{\text{imag}}=0.05$ where $\kappa_{\text{imag}}$ denotes the gain imposed by the rear-end constraint \eqref{CBF_rear_end} when condition in Fig. $\ref{fig:switching}$ is satisfied, implying the imaginary vehicle at the stop line of the intersection. We denote it by $\kappa_{\text{imag}}$ to distinguish it from the gain $\kappa_R$, which is associated with the rear-end constraint between two vehicles. Finally we also provide quantitative results under varying penetration rates of CAVs.

In Fig. \ref{fig:path_5_positions}, we see the trajectories of five vehicles (two HDVs and three CAVs) over a horizon of 80 seconds. The y-axis represents the position of each vehicle, starting from the moment it enters the traffic light region. The first vehicle is a CAV. When the vehicles enter the region, the traffic light is green and remains green until $t = 15$ s.

Let us now examine the control behavior of the first CAV. In Fig. \ref{fig:CAV_3_control}, we observe that the initial control input of the vehicle follows the reference input given by \eqref{unconstrained trajectry}. Until $t = 7.4$ s (see box 1), the constraints on the control input and the rest of the system do not interfere with the reference input. At that point, however, constraint \eqref{boundary2} requires the vehicle to accelerate in order to guarantee it will cross the intersection before the traffic light changes (see box 5). This acceleration causes a deviation from the desired speed, resulting in a gradual decrease of the reference control input, as seen in the figure.

Once the vehicle ensures it can safely cross the traffic light, it decelerates and realigns with the reference control input at $t = 14.5$ s (see box 2). Shortly after, while crossing the intersection, the vehicle detects a preceding vehicle from a different path at $t = 16$ s (box 3). It must then satisfy the rear-end constraint and decelerates accordingly, again deviating from the reference input. Finally, at $t = 22.5$ s, it returns to the reference input and resumes the desired speed. The corresponding speed profile is shown in Fig. \ref{fig:path_5_Speeds}, where this vehicle is the first to enter the traffic light region. The figure also includes the speed trajectories of the remaining vehicles.

We now focus on the control behavior of the second vehicle, which is again a CAV. From the very beginning, the CAV deviates from the reference trajectory (see box 1), as it must obey the red traffic light until it becomes green (see box 4). Notably, the condition described in \eqref{eq:cross-time-limit} is not satisfied and as a result, the vehicle is required to stop at the traffic light using the rear end constraint \eqref{CBF_rear_end} by considering an imaginary vehicle at the stop line of the intersection, as explained in Fig.~\ref{fig:switching}.

Shortly before the signal turns green (see box 2), the condition in \eqref{eq:cross-time-limit} is eventually satisfied, allowing the vehicle to accelerate and reach the bound (21). Finally, at $t = 59$ s, the vehicle aligns with the reference trajectory, which in this case does not conflict with any rear-end constraints. As depicted in Fig. \ref{fig:path_5_Speeds}, the control input aim to drive the speeds of the vehicles to the desired speed of 12 m/s.

\subsection{Sensitivity analysis of the gains}
Note that the selection of the gains in the barrier conditions significantly influences the controller’s performance. Smaller gain values tend to make the controller more proactive, allowing it to adapt to the corresponding constraint earlier. In contrast, larger gain values result in a more reactive behavior, where the controller enforces the constraint more abruptly, just before it becomes active.

To illustrate this, consider the case of a CAV approaching an intersection with a red traffic light. In Fig. \ref{fig:k_imag_analysis_position}, we show the trajectory of a CAV starting from the same initial position under varying values of the gain $\kappa_{\text{imag}}$. Recall that we define the gain $\kappa_{\text{imag}}$ as the gain associated with the traffic-light constraint, which considers an imaginary vehicle positioned at the stop line when condition in \eqref{thm:no-overshoot} is not satisfied. As seen in the figure, when $\kappa_{\text{imag}} = 0.01$, the vehicle decelerates earlier, whereas with $\kappa_{\text{imag}} = 0.1$, the deceleration occurs closer to the traffic light. This behavior highlights how smaller gain values enable earlier adaptation to safety constraints, while larger gains delay the reaction.

A similar pattern emerges when examining the effect of the gain $\kappa_T$, associated with CBF constraint \eqref{boundary1}, which enforces a lower-bound constraint to ensure timely crossing of the intersection. In Fig. \ref{fig:k_t_analysis_control}, we observe how different values of $\kappa_T$ influence the control bounds. For $\kappa_T = 0.01$, the upper bound on acceleration starts near 2 m/s², encouraging the vehicle to accelerate sooner. In contrast, for $\kappa_T = 0.3$, the upper bound begins at a lower value and increases more gradually as the vehicle approaches the intersection. This again confirms that smaller gain values result in earlier constraint enforcement. This behavior is further supported in Fig. \ref{fig:k_t_analysis_position}, where the vehicle crosses the intersection earlier for $\kappa_T = 0.01$ compared to $\kappa_T = 0.05$ and $\kappa_T = 0.1$.

Finally, Fig. \ref{fig:phi_analysis} illustrates how the gain $\phi$ in the unconstrained trajectory \eqref{unconstrained trajectry} affects the vehicle's speed profile when starting from rest at a red light. Since this gain scales the control action, higher values of $\phi$ lead to more abrupt acceleration, while smaller values result in smoother transitions. This is clearly depicted in Fig. \ref{fig:phi_analysis}, where for $\phi = 0.05$, convergence to the desired speed occurs more gradually, whereas for $\phi = 0.55$, the vehicle reaches the desired speed more quickly. 

As already discussed, the CBF structure defines lower and upper bounds on the control input. Thus, the selection of the gains can have direct implications on fuel consumption, as aggressive acceleration patterns tend to increase fuel usage, whereas smoother trajectories are more energy-efficient. Similarly, the gain $\phi$, which scales the control effort toward the reference (unconstrained) trajectory, affects how aggressively a vehicle attempts to match its desired speed. Higher values of $\phi$ result in faster convergence but at the cost of increased acceleration effort. Consequently, by carefully tuning the CBF gains and $\phi$, one can prioritize different performance objectives—minimizing travel time, enhancing safety, or improving fuel efficiency—depending on the needs of the application.

\subsection{Turn on Red Light}

We now examine the behavior of a CAV when a right turn on red is permitted. In Fig. \ref{fig:turn_on_green_both} (left), we present two scenarios involving an HDV approaching and crossing a traffic light. In Case 2, the HDV enters the traffic light region later than in Case 1. Fig. \ref{fig:turn_on_green_both} (right) shows the trajectory of a corresponding CAV that intends to make a right turn and has a potential conflict with the HDV, in both cases. 

In the first case, the CAV remains stopped at the red light for a longer period because the HDV from the conflicting direction approaches in a manner that violates the condition discussed in \eqref{merging_condition}. In contrast, in Case 2, since the HDV enters the traffic light region later, the CAV is able to safely proceed with the right turn on red. This is because the condition for conflict avoidance is satisfied, and no danger is imposed on either vehicle.

\subsection{Quantitative Analysis of Gain Parameters}

To evaluate the impact of the gain parameters on overall traffic flow, we conducted a quantitative analysis by measuring the average time vehicles spend within the traffic light region under identical initial conditions and varying gain values.

Table~\ref{tab:different gains} presents the average dwell times corresponding to different values of the gains $\kappa_R$, $\kappa_T$, and $\kappa_{\text{imag}}$, with $\phi$ fixed at 0.25. We observe that higher gain values ($\kappa_R=0.06$, $\kappa_T=0.3$, $\kappa_{\text{imag}}=0.075$) result in a lower average dwell time of 55.758 seconds. In contrast, for lower gain values ($\kappa_R=0.03$, $\kappa_T=0.1$, $\kappa_{\text{imag}}=0.02$), the average dwell time increases to 57.708 seconds. This suggests that higher gains promote shorter travel times, likely because vehicles respond to constraints later, avoiding early deceleration—particularly in the context of rear-end safety constraints.

Similarly, in Table~\ref{tab:different phis}, we report the average dwell times for fixed gain values ($\kappa_R=0.06$, $\kappa_T=0.3$, $\kappa_{\text{imag}}=0.075$) and varying values of $\phi$. The results show that increasing $\phi$ leads to reduced travel times, which is expected since higher values of $\phi$ produce stronger control efforts that accelerate convergence to the reference speed.

Finally, Table~\ref{tab:different penetrations} shows how the average travel time changes with different CAV penetration rates. As expected, higher CAV penetration results in shorter travel times. It is important to note that in this study, CAVs are assumed to communicate only with the infrastructure (i.e., traffic lights) while they perceive their immediate surroundings using onboard sensors, cameras, or lidar. Enabling inter-vehicle communication introduces additional hardware and software requirements and opens up alternative modeling and control approaches; see our previous work \cite{tzortzoglou2024feasibility} for further discussion.

\begin{figure*}
    \centering
    \includegraphics[width=0.92\linewidth]{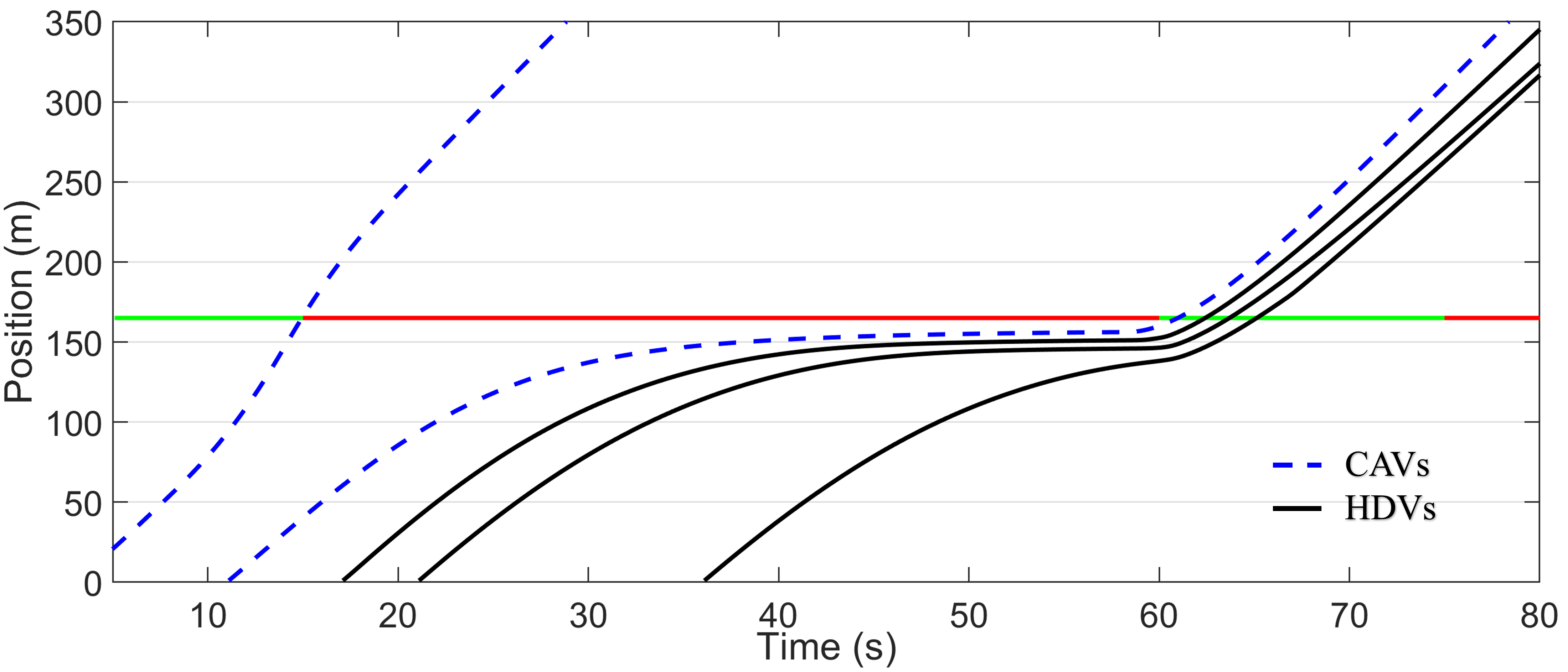}
    \caption{Position trajectories.}
    \label{fig:path_5_positions}
\end{figure*}
\begin{figure*}
    \centering
    \includegraphics[width=0.9\linewidth]{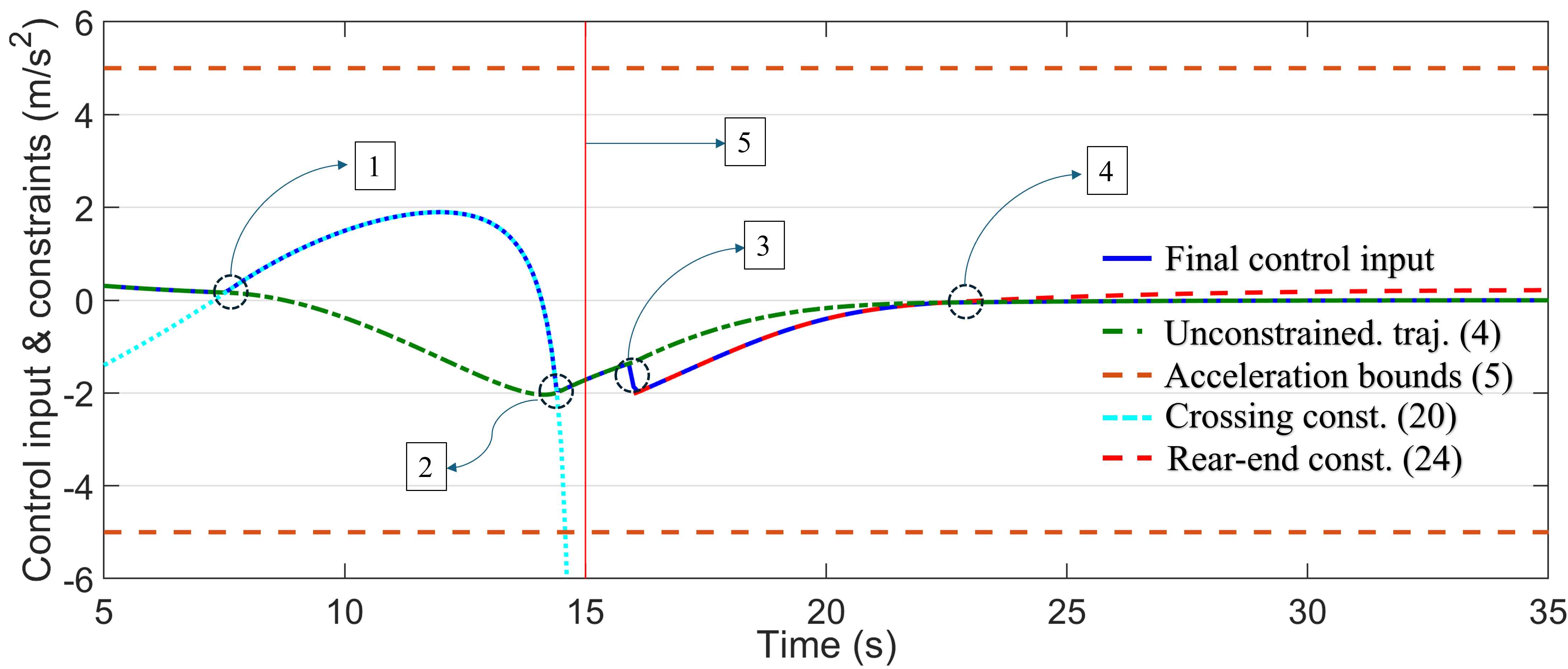}
    \caption{Control input and constraints of the first CAV.}
    \label{fig:CAV_3_control}
\end{figure*}
\begin{figure*}
    \centering
    \includegraphics[width=0.9\linewidth]{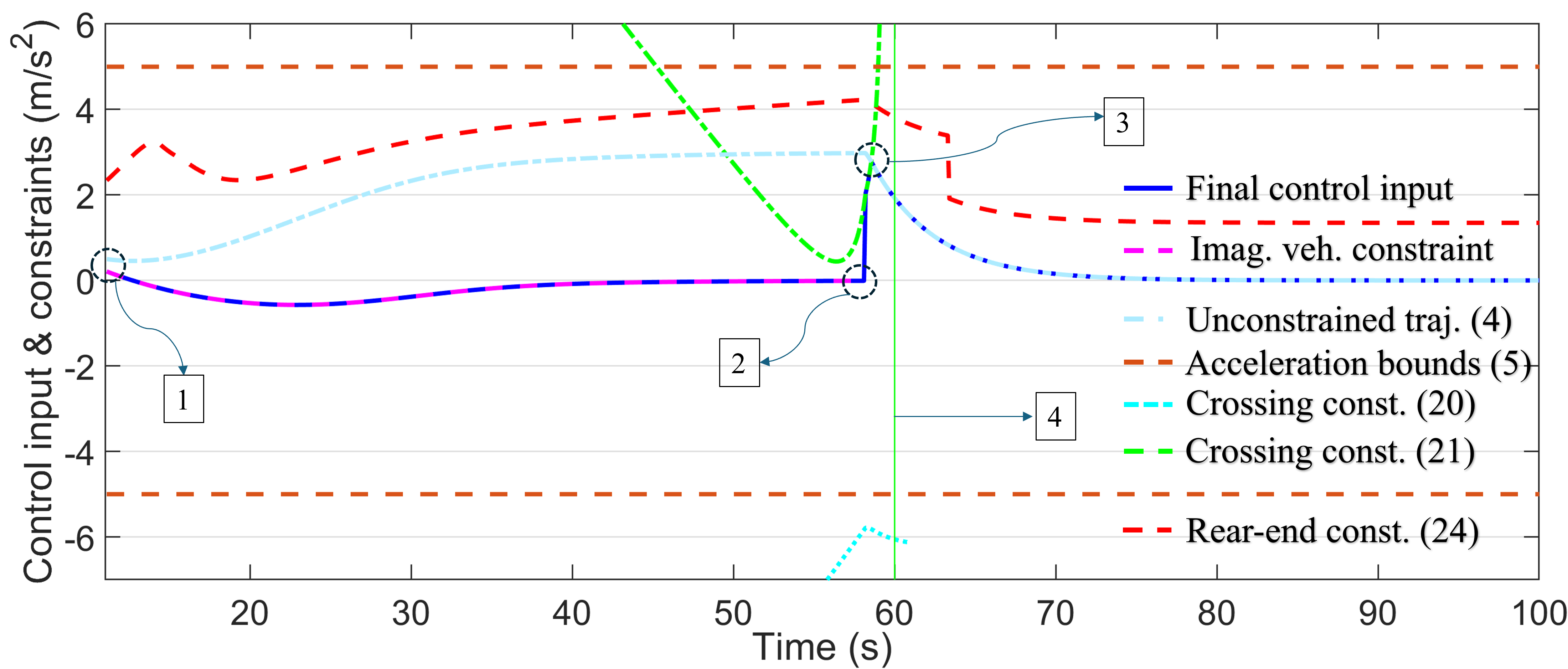}
    \caption{Control input and constraints of the second CAV.}
    \label{fig:CAV_11_control}
\end{figure*}
\begin{figure*}
    \centering
    \includegraphics[width=0.91\linewidth]{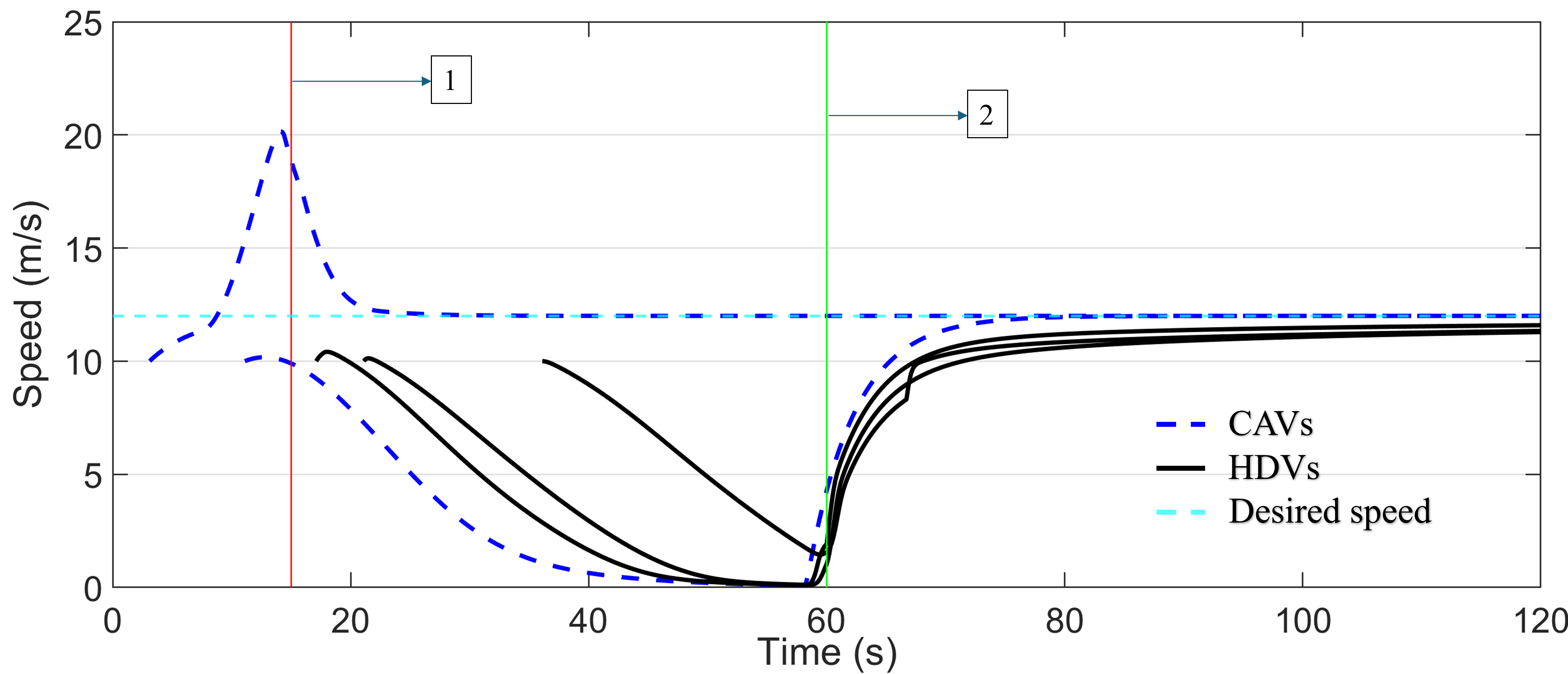}
    \caption{Speed trajectories.}
    \label{fig:path_5_Speeds}
\end{figure*}
\begin{figure}
    \centering
    \includegraphics[width=0.9\linewidth]{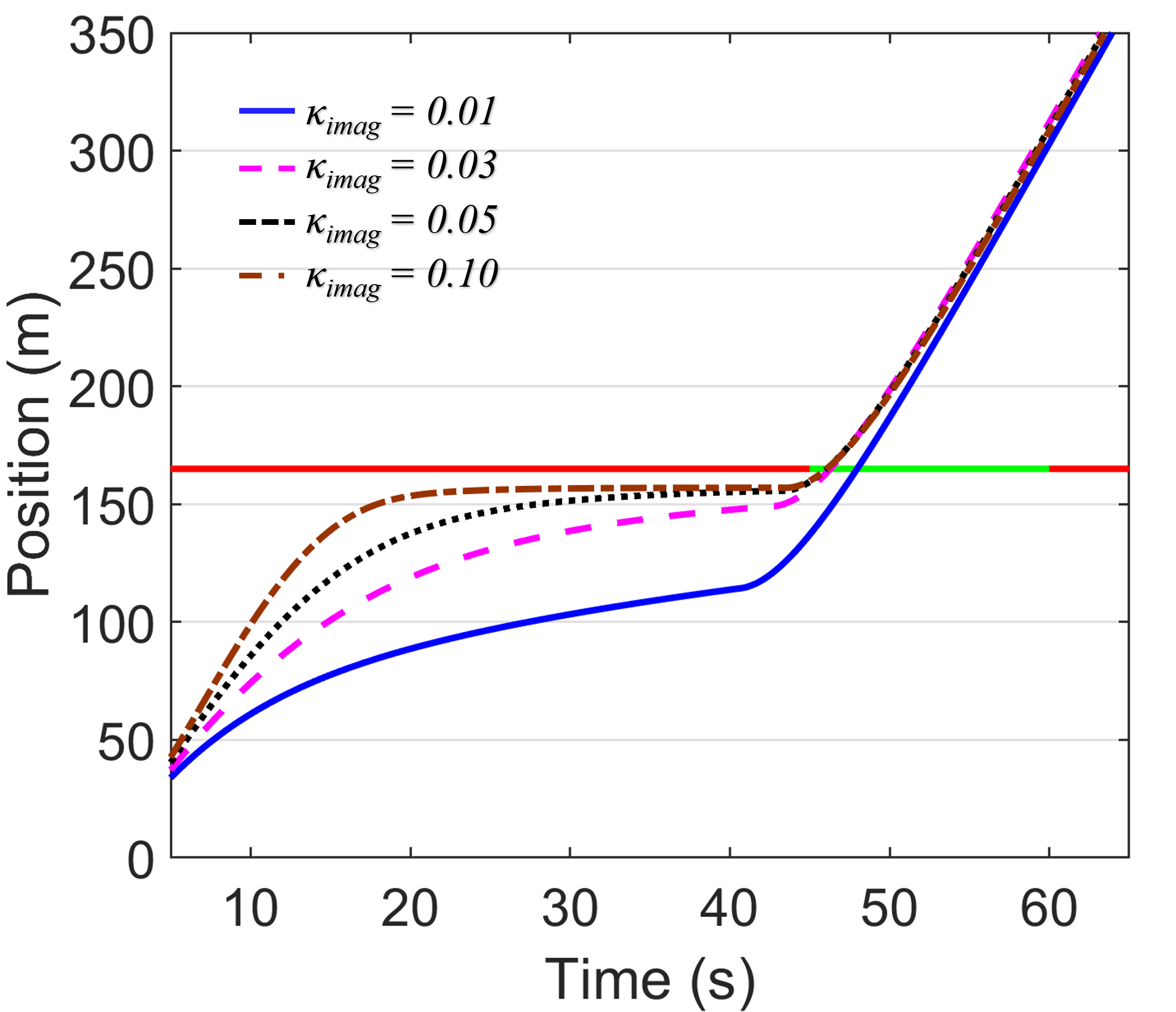}
    \caption{Different position trajectories for different gains associated (24).}
    \label{fig:k_imag_analysis_position}
\end{figure}
\begin{figure}
    \centering
    \includegraphics[width=0.9\linewidth]{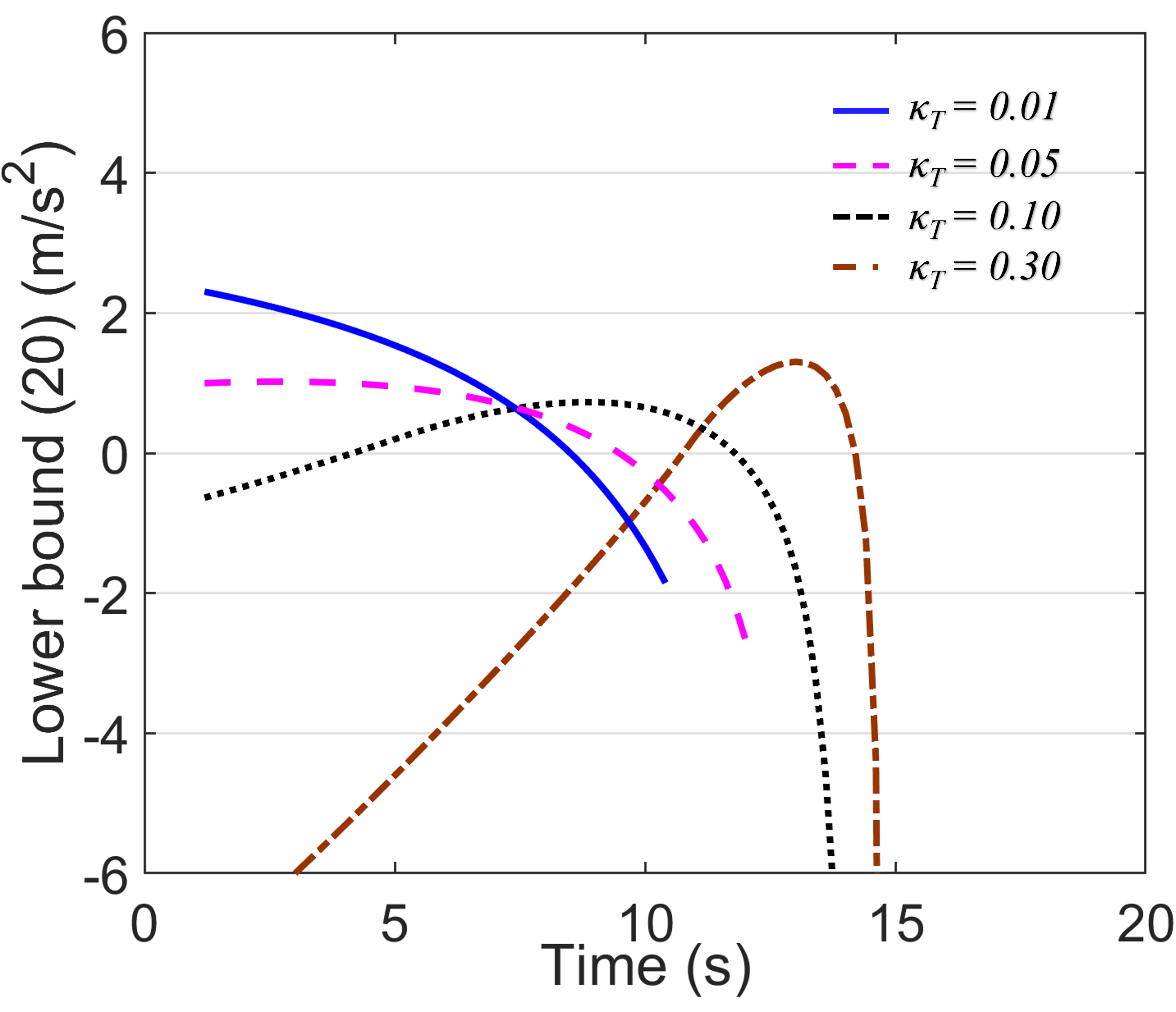}
    \caption{Different control bounds for different gains associated with (20).}
    \label{fig:k_t_analysis_control}
\end{figure}
\begin{figure}
    \centering
    \includegraphics[width=0.9\linewidth]{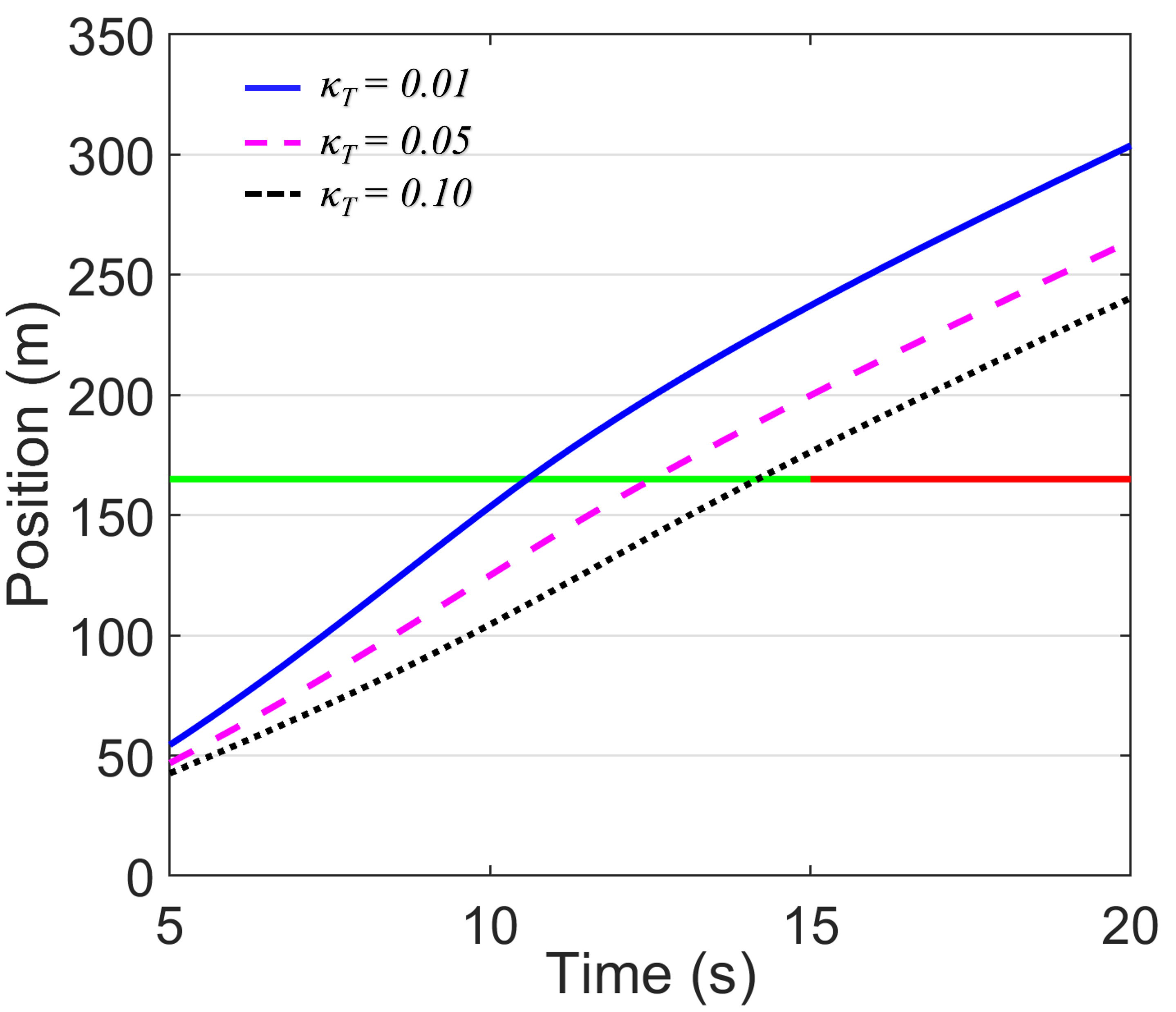}
    \caption{Different position trajectories for different gains associated with (20).}
    \label{fig:k_t_analysis_position}
\end{figure}
\begin{figure}
    \centering
    \includegraphics[width=0.9\linewidth]{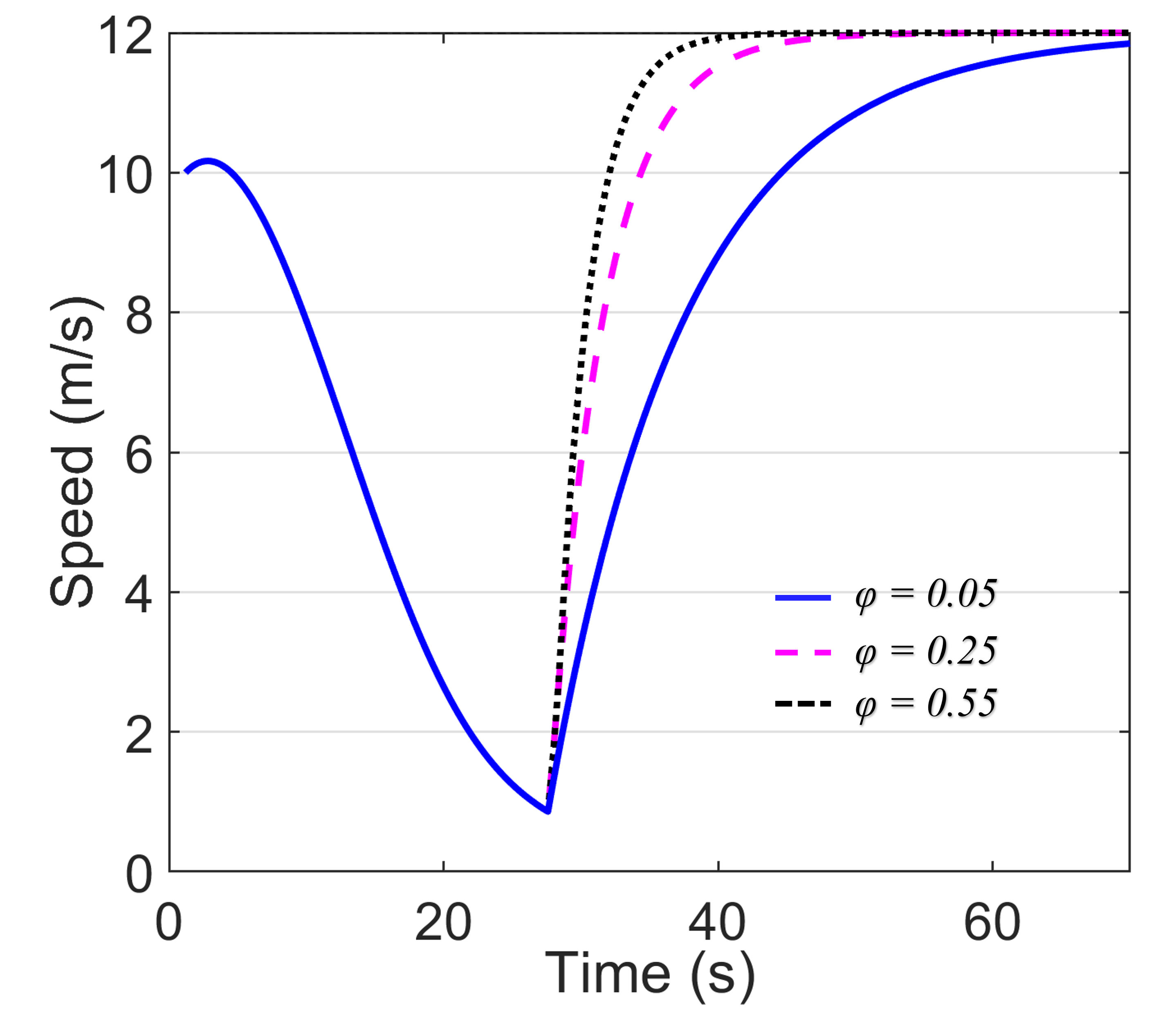}
    \caption{Different speed convergence for different values of $\phi$.}
    \label{fig:phi_analysis}
\end{figure}

\begin{figure*}[htbp]
        \centering
  \begin{tabular}{p{0.45\textwidth}}
        \includegraphics[width=\linewidth]{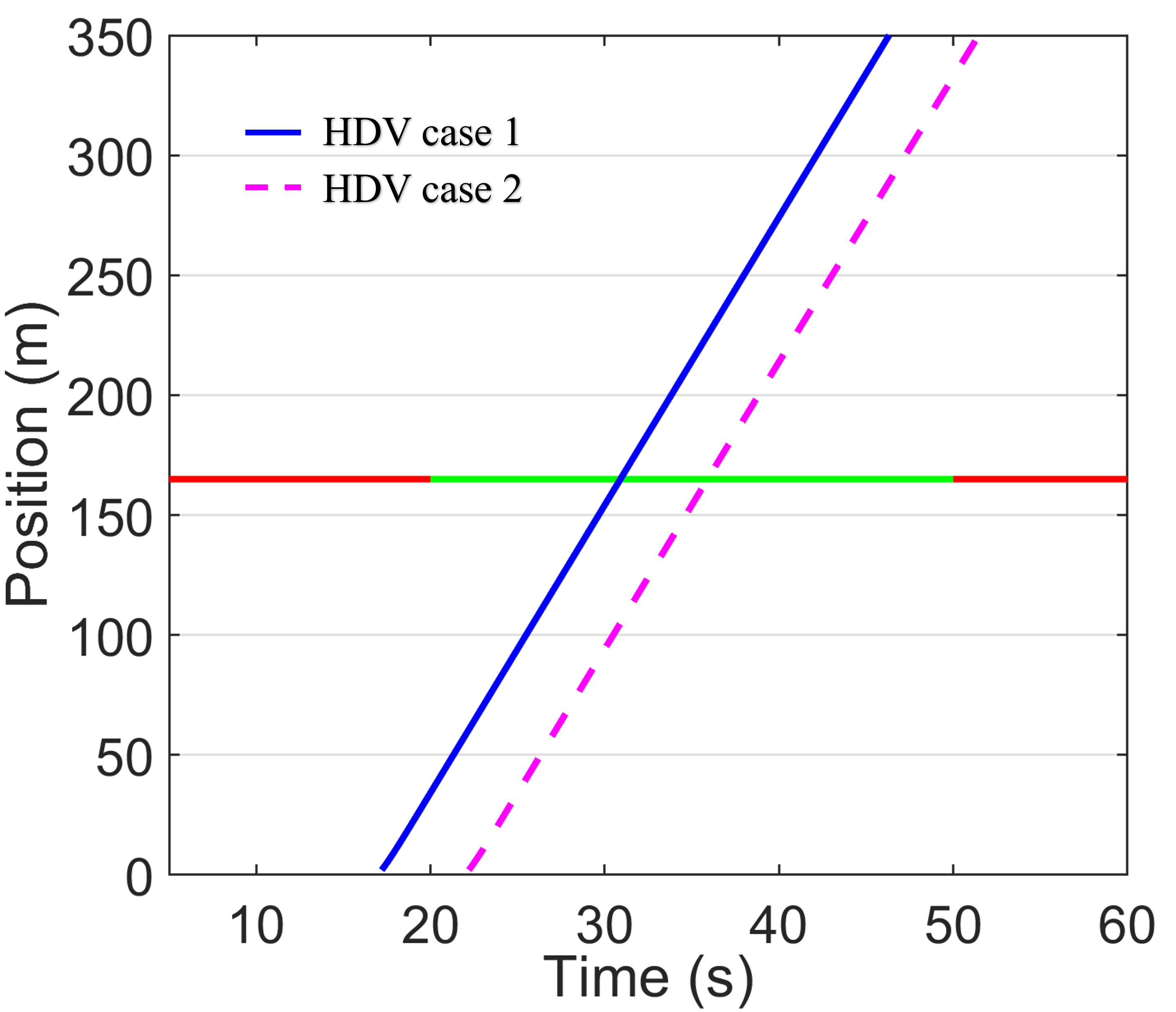} \\
        \centering
            \small{An HDV entering and crossing the intersection at two different times.}
            \end{tabular}
    \hfill
  \begin{tabular}{p{0.45\textwidth}}
        \includegraphics[width=\linewidth]{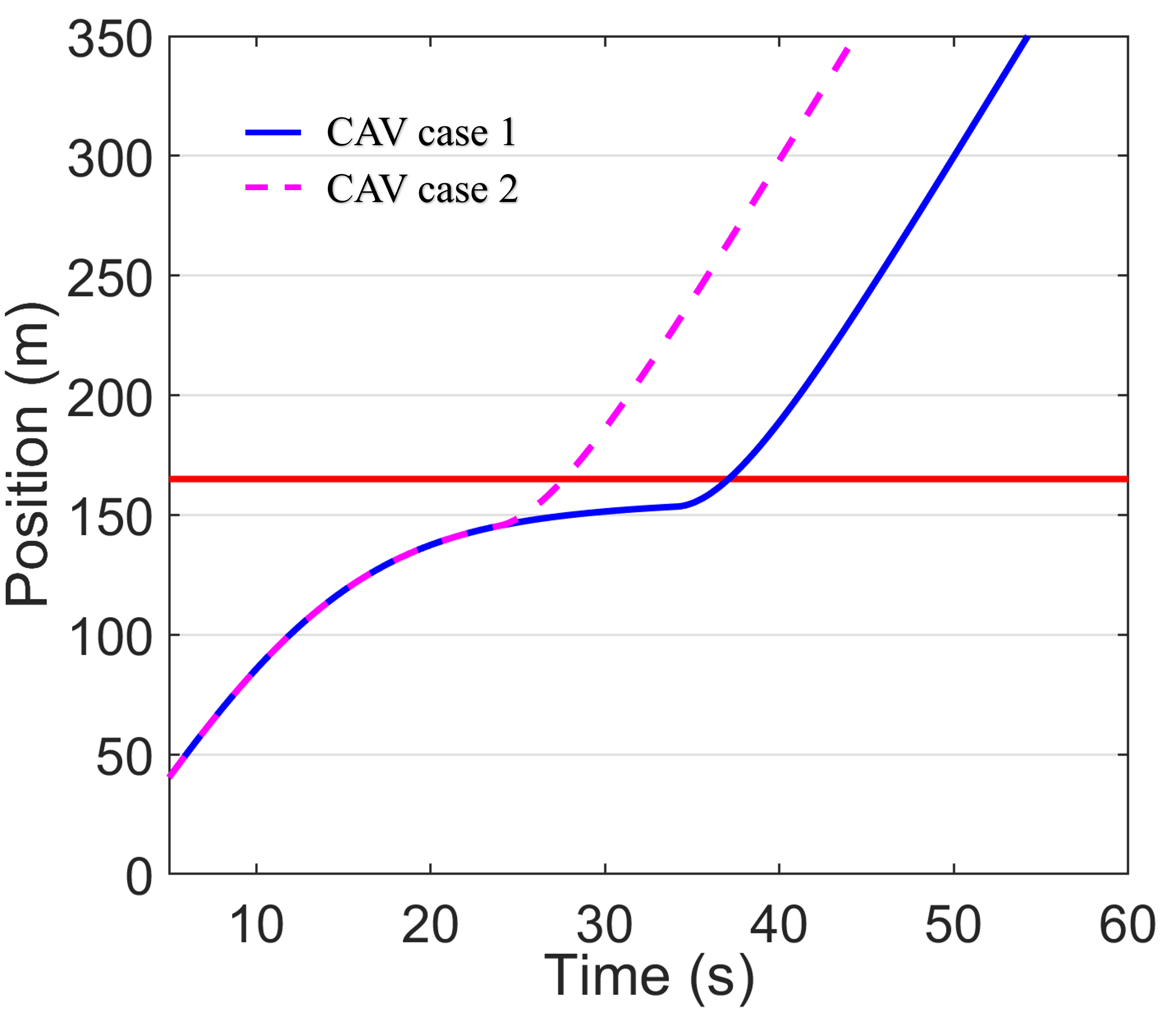} \\
        \centering
            \small{A CAV entering and crossing the intersection, while sharing a conflicting point with the HDV.}
  \end{tabular}
    \caption{Illustration of two different cases of HDV and CAV behaviors when CAV has to turn on red. }
    \label{fig:turn_on_green_both}
\end{figure*}

\begin{table}[ht]
\centering
\caption{Average dwell time in the traffic light region for different gains selection and $\phi = 0.25$.}
\label{tab:different gains}
\begin{tabular}{|l|c|c|c|}
\hline
\textbf{Gains Selection} & \textbf{Average time}  \\
\hline
$\kappa_R=0.03, \kappa_T=0.1, \kappa_I=0.02$    & 57.71 s            \\
$\kappa_R=0.04, \kappa_T=0.2, \kappa_I=0.05$       & 56.65 s             \\
$\kappa_R=0.06, \kappa_T=0.3, \kappa_I=0.075$    & 55.76 s             \\
\hline
\end{tabular}
\end{table}

\begin{table}[ht]
\centering
\caption{Average dwell time in the traffic light region for different $\phi$ with $\kappa_R=0.04, \kappa_T=0.2, \kappa_I=0.05$.}
\label{tab:different phis}
\begin{tabular}{|l|c|c|c|}
\hline
\textbf{Gains Selection} & \textbf{Average time}  \\
\hline
$\phi = 0.05$    & 64.86 s            \\
$\phi = 0.25$       & 56.65 s             \\
$\phi = 0.55$    & 55.49 s             \\
\hline
\end{tabular}
\end{table}
\begin{table}[ht]
\centering
\caption{Average dwell time in the traffic light region for different penetration rates and $\kappa_R=0.04, \kappa_T=0.2, \kappa_I=0.05, \phi = 0.25$.}
\label{tab:different penetrations}
\begin{tabular}{|l|c|c|c|}
\hline
\textbf{Penetration rate} & \textbf{Average time}  \\
\hline
CAVs 0\%    & 60.45 s            \\
CAVs 20\%    & 59.03 s            \\
CAVs 40\%       & 57.28 s             \\
CAVs 60\%    & 56.66 s             \\
\hline
\end{tabular}
\end{table}


\section{Conclusion} \label{sec:conclusion}

In this paper, we presented a comprehensive framework to address the challenges of integrating connected and automated vehicles (CAVs) and human-driven vehicles (HDVs) in mixed traffic scenarios, focusing on long-duration autonomy. By transforming infinite-horizon optimal control problems into simple reactive controllers, we demonstrated a scalable solution for CAV trajectory optimization. We further incorporated control barrier functions to guarantee safety in scenarios involving traffic lights and complex intersections, including right turns on red signals. Our simulations validated the effectiveness of this framework, showing improved traffic throughput and safety. Future research could extend this work by exploring dynamic traffic conditions, and multi-agent interactions to fully realize the potential of autonomous mobility systems.

\bibliographystyle{unsrt}
\bibliography{my_pubs,mendeley,refs}

\begin{thebibliography}{10}

\bibitem{bang2024mobility}
Heeseung Bang, Aditya Dave, Filippos~N Tzortzoglou, and Andreas~A Malikopoulos.
\newblock A mobility equity metric for multi-modal intelligent transportation systems.
\newblock {\em IFAC-PapersOnLine}, 58(10):114--119, 2024.

\bibitem{alessandrini2015automated}
Adriano Alessandrini, Andrea Campagna, Paolo Delle~Site, Francesco Filippi, and Luca Persia.
\newblock Automated vehicles and the rethinking of mobility and cities.
\newblock {\em Transportation Research Procedia}, 5:145--160, 2015.

\bibitem{li2023survey}
Jinjue Li, Chunhui Yu, Zilin Shen, Zicheng Su, and Wanjing Ma.
\newblock A survey on urban traffic control under mixed traffic environment with connected automated vehicles.
\newblock {\em Transportation research part C: emerging technologies}, 154:104258, 2023.

\bibitem{li2024safe}
Anni Li, Christos~G Cassandras, and Wei Xiao.
\newblock Safe optimal interactions between automated and human-driven vehicles in mixed traffic with event-triggered control barrier functions.
\newblock In {\em 2024 American Control Conference (ACC)}, pages 1455--1460. IEEE, 2024.

\bibitem{le2024stochastic}
Viet-Anh Le, Behdad Chalaki, Filippos~N Tzortzoglou, and Andreas~A Malikopoulos.
\newblock Stochastic time-optimal trajectory planning for connected and automated vehicles in mixed-traffic merging scenarios.
\newblock {\em IEEE Transactions on Control Systems Technology}, 2024.

\bibitem{chen2023deep}
Dong Chen, Mohammad~R Hajidavalloo, Zhaojian Li, Kaian Chen, Yongqiang Wang, Longsheng Jiang, and Yue Wang.
\newblock Deep multi-agent reinforcement learning for highway on-ramp merging in mixed traffic.
\newblock {\em IEEE Transactions on Intelligent Transportation Systems}, 24(11):11623--11638, 2023.

\bibitem{li2024managing}
Duowei Li, Feng Zhu, Jianping Wu, Yiik~Diew Wong, and Tianyi Chen.
\newblock Managing mixed traffic at signalized intersections: An adaptive signal control and cav coordination system based on deep reinforcement learning.
\newblock {\em Expert Systems with Applications}, 238:121959, 2024.

\bibitem{wang2022ego}
Yibing Wang, Long Wang, Jingqiu Guo, Ioannis Papamichail, Markos Papageorgiou, Fei-Yue Wang, Robert Bertini, Wei Hua, and Qinmin Yang.
\newblock Ego-efficient lane changes of connected and automated vehicles with impacts on traffic flow.
\newblock {\em Transportation research part C: emerging technologies}, 138:103478, 2022.

\bibitem{sun2020optimal}
Chao Sun, Jacopo Guanetti, Francesco Borrelli, and Scott~J Moura.
\newblock Optimal eco-driving control of connected and autonomous vehicles through signalized intersections.
\newblock {\em IEEE Internet of Things Journal}, 7(5):3759--3773, 2020.

\bibitem{chochliouros2021v2x}
Ioannis~P Chochliouros, Anastasia~S Spiliopoulou, Pavlos Lazaridis, Zaharias Zaharis, Michail-Alexandros Kourtis, Slawomir Kuklinski, Lechos{\l}aw Tomaszewski, Dimitrios Arvanitozisis, and Alexandros Kostopoulos.
\newblock V2x communications for the support of glosa and intelligent intersection applications.
\newblock In {\em IFIP International Conference on Artificial Intelligence Applications and Innovations}, pages 138--152. Springer, 2021.

\bibitem{tajalli2021traffic}
Mehrdad Tajalli and Ali Hajbabaie.
\newblock Traffic signal timing and trajectory optimization in a mixed autonomy traffic stream.
\newblock {\em IEEE Transactions on Intelligent Transportation Systems}, 23(7):6525--6538, 2021.

\bibitem{tzortzoglou2025safeefficientcoexistenceautonomous}
Filippos~N. Tzortzoglou, Logan~E. Beaver, and Andreas~A. Malikopoulos.
\newblock Safe and efficient coexistence of autonomous vehicles with human-driven traffic at signalized intersections, 2025.

\bibitem{Bryson1975AppliedControl}
A.~E. Bryson, Jr. and Yu-Chi Ho.
\newblock {\em {Applied Optimal Control: Optimization, Estimation, and Control}}.
\newblock John Wiley and Sons, 1975.

\bibitem{Ross2015}
I.~Michael Ross.
\newblock {\em {A Primer on Pontryagin's Principle in Optimal Control}}.
\newblock Collegiate Publishers, San Francisco, 2nd edition, 2015.

\bibitem{beaver2024long}
Logan~E Beaver.
\newblock A long-duration autonomy approach to connected and automated vehicles.
\newblock {\em arXiv preprint arXiv:2412.11804}, 2024.

\bibitem{Levine2011OnFlatness}
Jean L{\'{e}}vine.
\newblock {On necessary and sufficient conditions for differential flatness}.
\newblock {\em Applicable Algebra in Engineering, Communications and Computing}, 22(1):47--90, 2011.

\bibitem{ChalakiCBF2022}
Behdad Chalaki and Andreas~A Malikopoulos.
\newblock A barrier-certified optimal coordination framework for connected and automated vehicles.
\newblock In {\em Proceedings of the 61th IEEE Conference on Decision and Control (CDC)}, pages 2264--2269, 2022.

\bibitem{chalaki2021RobustGP}
Behdad Chalaki and Andreas~A Malikopoulos.
\newblock Robust learning-based trajectory planning for emerging mobility systems.
\newblock In {\em 2022 American Control Conference (ACC)}, pages 2154--2159, 2022.

\bibitem{afdc10312}
{U.S. Department of Energy}.
\newblock Alternative fueling station counts by state.
\newblock \url{https://afdc.energy.gov/data/10312}, 2024.

\bibitem{malikopoulos2021optimal}
Andreas~A Malikopoulos, Logan Beaver, and Ioannis~Vasileios Chremos.
\newblock Optimal time trajectory and coordination for connected and automated vehicles.
\newblock {\em Automatica}, 125:109469, 2021.

\bibitem{khalil1992nonlinear}
HK~Khalil.
\newblock {\em Nonlinear systems}.
\newblock New York [etc.]: Macmillan [etc.], 1992.

\bibitem{Lindemann2019ControlTasks}
Lars Lindemann and Dimos~V. Dimarogonas.
\newblock {Control barrier functions for signal temporal logic tasks}.
\newblock {\em IEEE Control Systems Letters}, 3(1):96--101, 1 2019.

\bibitem{Ames2019ControlApplications}
Aaron~D. Ames, Samuel Coogan, Magnus Egerstedt, Gennaro Notomista, Koushil Sreenath, and Paulo Tabuada.
\newblock {Control barrier functions: Theory and applications}.
\newblock In {\em 2019 18th European Control Conference, ECC 2019}, pages 3420--3431. Institute of Electrical and Electronics Engineers Inc., 6 2019.

\bibitem{beaver2021constraint}
Logan~E Beaver and Andreas~A Malikopoulos.
\newblock Constraint-driven optimal control of multiagent systems: A highway platooning case study.
\newblock {\em IEEE Control Systems Letters}, 6:1754--1759, 2021.

\bibitem{beaver2024optimal}
Logan~E Beaver and Andreas~A Malikopoulos.
\newblock Optimal control of differentially flat systems is surprisingly easy.
\newblock {\em Automatica}, 159:111404, 2024.

\bibitem{le2024distributed}
Viet-Anh Le and Andreas~A Malikopoulos.
\newblock Distributed optimization for traffic light control and connected automated vehicle coordination in mixed-traffic intersections.
\newblock {\em IEEE Control Systems Letters}, 2024.

\bibitem{albeaik2022limitations}
Saleh Albeaik, Alexandre Bayen, Maria~Teresa Chiri, Xiaoqian Gong, Amaury Hayat, Nicolas Kardous, Alexander Keimer, Sean~T McQuade, Benedetto Piccoli, and Yiling You.
\newblock Limitations and improvements of the intelligent driver model (idm).
\newblock {\em SIAM Journal on Applied Dynamical Systems}, 21(3):1862--1892, 2022.

\bibitem{corless1996lambert}
Robert~M Corless, Gaston~H Gonnet, David~EG Hare, David~J Jeffrey, and Donald~E Knuth.
\newblock On the lambert w function.
\newblock {\em Advances in Computational mathematics}, 5:329--359, 1996.

\bibitem{tzortzoglou2024feasibility}
Filippos~N Tzortzoglou, Logan~E Beaver, and Andreas~A Malikopoulos.
\newblock A feasibility analysis at signal-free intersections.
\newblock {\em IEEE Control Systems Letters}, 2024.

\end{thebibliography}

\vspace{5pt}
\begin{wrapfigure}{l}{0.9in}
\includegraphics[width=1in,height=1.25in,clip,keepaspectratio]{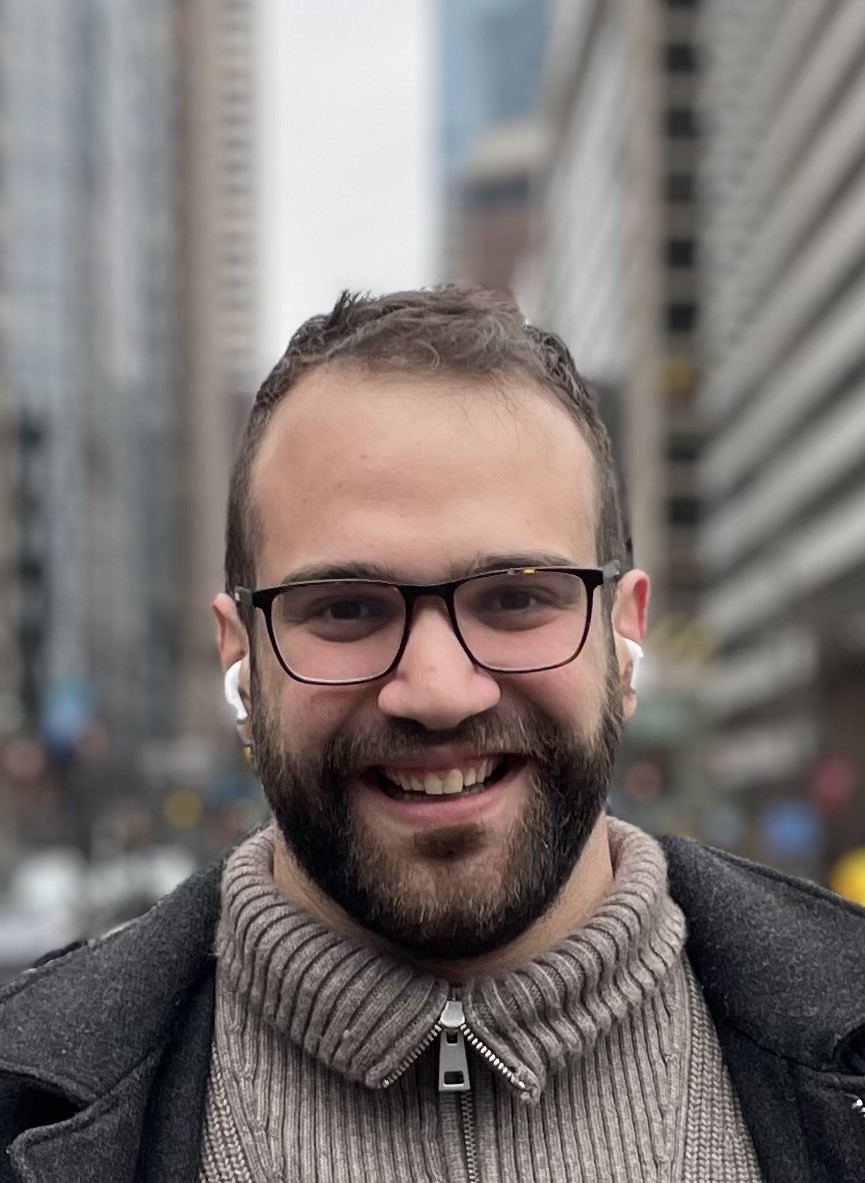}
\end{wrapfigure}
\noindent
\textbf{Filippos N. Tzortzoglou} (Student Member, IEEE) received the Diploma with an integrated M.S. degree in production engineering and management from the Technical University of Crete, Chania, Greece, in 2022. He is currently pursuing the Ph.D. degree with the Civil and Environmental Engineering Department, Cornell University, Ithaca, NY, USA. In 2022, he joined the Mechanical Engineering Department, University of Delaware, Newark, DE, USA, as a Research and Teaching Assistant. Also, in 2024, he joined MathWorks, Natick, MA, USA, as a Research Intern. His research interests lie in the area of automatic control, with applications in transportation and autonomous vehicles. Mr. Tzortzoglou has received several fellowships from foundations across the USA and the National Research Foundation.

\vspace{5pt}

\begin{wrapfigure}{l}{0.9in}
\includegraphics[width=1in,height=1.25in,clip,keepaspectratio]{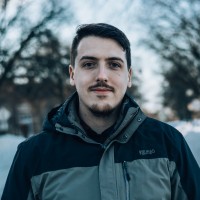}
\end{wrapfigure}

\textbf{Logan E. Beaver} (Member, IEEE) received the B.S. degree from the Milwaukee School of Engineering, Milwaukee, WI, USA, in 2015, and the M.S. degree from Marquette University, Milwaukee, WI, USA, in 2017, both in mechanical engineering, and the Ph.D. degree in mechanical engineering from the University of Delaware, Newark, DE, USA, in 2022.
He was a Postdoctoral Researcher with the Division of Systems Engineering, Boston University, Boston, MA, USA, from 2022 to 2023.
He is currently an Assistant Professor with Old Dominion University, Norfolk, VA, USA, where he directs the Intelligent Systems Laboratory.
His research interests include the stabilization and control of complex multi-agent and swarm systems using optimization and optimal control.

\end{document}